%%%%%%%%%%%%%%%%%%%%%%%%%%%%%%%%%%%%%%%%%%%%%%%%%%

% make copies of this project for starting new projects

%%%%%%%%%%%%%%%%%%%%%%%%%%%%%%%%%%%%%%%%%%%%%%%%%%

%%%% DOCUMENT CLASS CHOICES %%%%

%%% IEEE TRANS 

%\documentclass[journal, 12pt, draftclsnofoot, onecolumn]{resources/IEEEtran}

%\documentclass[journal, 12pt]{resources/IEEEtran}

\documentclass[10pt, conference, letterpaper]{IEEEtran}

%\IEEEoverridecommandlockouts
% The preceding line is only needed to identify funding in the first footnote. If that is unneeded, please comment it out.

%%% IEEE CONF

%\documentclass[letterpaper, 10 pt, conference]{resources/ieeeconf}  
% Comment this line out if you need a4paper

%\documentclass[a4paper, 10pt, conference]{ieeeconf}      
% Use this line for a4 paper

\IEEEoverridecommandlockouts                 
% This command is only needed if 
% you want to use the \thanks command

%\overrideIEEEmargins                         
% Needed to meet printer requirements.

%%% NONE

%\documentclass{article}
%\usepackage[margin=1in]{geometry}

%%%% %%%%

%%%% PACKAGES AND COMMANDS %%%%

\usepackage[utf8]{inputenc}

%%% Lines below needed if clash with \proof error
%\let\proof\relax 
%\let\endproof\relax
\usepackage{amsthm, amsmath, amsfonts, amssymb}

\usepackage{enumerate}
\usepackage{graphicx}
\usepackage{mathtools}
\usepackage{thmtools}
\usepackage{thm-restate}
\usepackage{float}
\usepackage{enumerate}
\usepackage{marginnote}
\usepackage[bookmarks=false]{hyperref}
\usepackage{cleveref}
\usepackage{subfigure}
\usepackage{xcolor}

% for enumerate
\usepackage[shortlabels]{enumitem}

% package autonum is used to only number equations which are referenced
\usepackage{autonum}

% special packages Aviv created
\usepackage{resources/custom_commands}
\usepackage{resources/coloredboxes}

% useful math symbol commands

% commands for automatic equation break

% commands for annotations

\definecolor{olivedrab}{rgb}{0.42, 0.56, 0.14}

\definecolor{palatinatepurple}{rgb}{0.41, 0.16, 0.38}

%%%% %%%%

%%%% SPECIAL COMMANDS FOR THIS PROJECT %%%%

\newif\ifshortversion
%\shortversiontrue
\shortversionfalse

%%%% %%%%

\begin{document}

\title{A Model-Driven Lossless Compression Algorithm Resistant to Mismatch\\
{\footnotesize }
\thanks{This work was supported in part by the MIT Undergraduate Research Opportunities Program (UROP) and the DARPA AIQ Program. C. Hu is with the Department of Civil and Environmental Engineering and the Department of Electrical Engineering and Computer Science at the Massachusetts Institute of Technology, Cambridge, MA, USA (email: {corhu@mit.edu}). J. Tang is with the Department of Mathematics and Computer Science at the College of the Holy Cross, Worcester, MA, USA (email: {jtang@holycross.edu}).
}
}

\author{Cordelia Hu, Jennifer Tang}

%\author{\IEEEauthorblockN{Cordelia Hu}
%\IEEEauthorblockA{\textit{CEE and EECS} \\
%\textit{MIT}, Cambridge, MA, USA \\
%corhu@mit.edu}
%\and
%\IEEEauthorblockN{Jennifer Tang}
%\IEEEauthorblockA{\textit{Mathematics and CS} \\
%\textit{College of the Holy Cross}, Worcester, MA, USA \\
%jtang@holycross.edu}
%}

\maketitle

% next two lines for page number in IEEEtran
%\thispagestyle{plain}
%\pagestyle{plain}

%\tableofcontents

\begin{abstract} %THIS PAPER IS ELIGIBLE FOR THE STUDENT PAPER AWARD. 
Due to the fundamental connection between next-symbol prediction and compression, modern predictive models, such as large language models (LLMs), can be combined with entropy coding to achieve compression rates that surpass those of standard compression algorithms. However, this approach relies on the assumption that the predictive model produces identical output distributions at both the encoder and decoder, since even small mismatches can cause the decoding to fail. This assumption often fails with complex predictive models, particularly those based on neural networks, a phenomenon referred to as \emph{non-determinism}.
In this work, we propose a new compression algorithm based on next-token prediction that is robust to arbitrarily large, but structured, prediction mismatches. We prove the correctness of the proposed scheme under a formal mismatch certification, characterize its theoretical performance, and validate it experimentally on real datasets. Our results demonstrate reliable operation within the certified mismatch regime while achieving compression ratios that exceed those of commonly used compression methods.

\end{abstract}

\section{Introduction}

This work considers the problem of finding an algorithmic solution to non-determinism in the setting of lossless data compression. For many powerful learned predictive models, such as Large Language Models (LLMs), getting identical outputs under the same inputs is not guaranteed on different machines. This causes critical failures when predictive models are used in traditional lossless data compression pipelines.

%Describe the problem here and why is it important. For Information Theory conference, no need to explain lossless compression. However, do explain the specific applications. 

%(You can save most of writing this section for last. However, do the main contributions and notation (if any) sections.)

\subsection{Model-driven Compression}

Lossless source coding is a classic problem where a discrete-time stochastic source emitting symbols is given, and the objective is to represent its output sequence using as few bits as possible while allowing exact reconstruction. %Shannon's source coding theorem \cite{shannon1948} establishes that the optimal compression rate is the entropy of the source distribution. 
Given an next-symbol probabilities, techniques such as Huffman coding and arithmetic coding can compress sources to nearly the entropy, essentially achieving optimal compression rate. %For a sequence of symbols with dependence, the encoder maintains a conditional distribution over the next symbol given its past and encodes the symbol using a code designed from the distribution.

From this perspective, the compression performance on real data, e.g. text, image, or videos, depends critically on how accurately the true source distribution is approximated. For example, one simple method of approximating the true source distribution for English text (used by Shannon \cite{shannon1950}) is to empirically build Markov models from sample text, where the current symbol's probability is only determined by its past $n$ symbols. 
%For instance, to compress text files in English, we do not know what the \emph{true} distribution of the English language is, but one solution is to try to model this distribution. These models can be created empirically, by examining existing text and counting the frequencies of letter occurrences. Shannon in 1950 [TODO: cite paper] does this by using what he calls n-grams, a model where the probability of the current letter is given based the proceeding $n$ letters before it (essentially an order $n$ Markov model over letters.) More sophisticated Markov models have been developed to model language [TODO: what are these]. 
Other commonly used compression algorithms, such as Lempel-Ziv-Welch (LZW), gzip, bzip2, and so forth, can be interpreted as implicitly building predictive distributions over future symbols as they process past symbols. 
%Other options for compression include universal compression techniques and context-weighted trees [TODO: cite].

Advances in machine learning have dramatically expanded the expressive power of learned predictive models. %Neural networks, including recurrent networks and transformers, can capture long-range dependencies and complex structure that were previously inaccessible. 
When combined with arithmetic coding \cite{pasco1976, rissanen1976, guazzo1980}, these models define a class of \emph{model-driven} compression schemes, which have demonstrated strong empirical performance in terms of compression rate \cite{knoll2014, cox2016, goyal2018deepzip, bellard2019, liu2019decmac, bellard2021, mao2022trace}.

The recent success of LLMs has further intensified interest in model-driven compression. LLMs are believed to model sources like human language very closely,
%LLMs generate probability distributions over tokens (which can be words or fragment of words for text). Given the previous context (sequence of tokens), LLMs create a probability of the next token. 
leading to LLM-driven compression systems such as LLMZip \cite{valmeekam2023llmzip} and Llama-Zip \cite{llama-zip}. LLMZip even uses LLM-driven compression to give new upper bounds on the entropy of English. 
In \cite{deletang_2024}, the authors demonstrate that compression using pre-trained models, including Llama~2 and Chinchilla, can outperform state-of-the-art general-purpose compressors on text, image and audio data. 
Additionally, model-driven compression has shown promise in lossless image compression \cite{lstm-images-2016, schiopu2018cnn, mentzer2019practical, rhee2022lc, chen2024large}, and also in more specific domains like power datasets \cite{lstm-power-data}, and neural network checkpoints \cite{kim2025-lstm-nn-checkpoints}. 
As long as there exist trained or fine-tuned models, any dataset in any domain can be compressed using these learned models in a way that can significantly narrow the gap to achieving information-theoretic limits, without needing further processing or any additional domain specific knowledge.

However, despite achieving the most competitive compression rates, LLM-driven compression has serious downsides preventing it from being readily adopted. %One of the these is that both the encoder and decoder need access to the same model in model-based compression. 
First, LLM models are very large, so if the model itself needs to be transferred between the encoder and decoder, this could negate the low compression rate achieved by the entropy coder. %However it is reasonable to assume that the encoder and decoder may already have access to the same model, or a large enough amount of data is being transmitted over a long period of time so that the model overhead becomes insignificant. 
Second, the need to perform a forward inference pass for each token can cause the compression to be extremely slow. %LLMs need to compute next-token probabilities on at runtime (and typically on a GPU), and this computation needs to be performed for each token in the sequence to compress. 
In contrast, algorithms like gzip are essentially instantaneous even when compressing 10 MB file sizes  \cite{mittu2024finezip}. %Similar issues prevented other model-based compression methods (like those based on LSTMs) from being widely adapted in the past. Luckily, runtimes of LLMs is an issue being worked on in many different contexts, and specifically for data compression, it has been examined in \cite{mittu2024finezip}. 
While these are critical issues, we do not address them in this work, as there is currently significant ongoing effort on improving LLM runtime efficiency. These efforts are not specifically targeted at compression, but model-driven compression will directly benefit from any such advances. 
Instead, in this work, we focus on a different limitation faced by LLM-driven data compression, one that is relatively overlooked yet serious. This limitation is the issue of non-determinism. 

%Despite their promise, LLM-driven compression systems face significant practical challenges. A primary concern is computational cost: large models must be executed once per symbol during both encoding and decoding, resulting in high latency and substantial memory requirements. Recent work has explored techniques for mitigating these costs, for example by accelerating inference or amortizing computation \cite{mittu2024finezip}.

\subsection{Non-Determinism and Prediction Mismatch}

%A fundamental obstacle arises from the sensitivity of arithmetic coding to discrepancies between the encoder and decoder probability models. As noted in \cite{Witten1987}, for arithmetic coding with adaptive probability models, ``It must be possible for the decoder to produce exactly the same probability distribution in the same context.'' Even slight deviations between the two distributions can cause the decoder to select an incorrect symbol, leading to a complete failure of decoding.
%In modern machine learning systems, such discrepancies are often unavoidable due to \emph{non-determinism} in numerical computation. 

Non-determinism refers to the phenomenon where executing the same program with identical inputs (and identical random seeds) produce different outputs on two different machines \cite{Cooper_2022, semmelrock2025}. In GPU-based inference pipelines, floating-point operations may be reordered or approximated differently in different hardware architectures, software versions, or execution environments, leading to small numerical differences that can cascade through the layers of computation and substantially alter predicted probabilities \cite{Shanmugavelu2025, Chen_2022}.
The prevalence of such effects have been documented in both empirical and theoretical studies \cite{morin2020, Eryilmaz2024, schlogl2023causesNumerical, Coakley_2022, atil2025llmdeviation}. GPU libraries %such as CUDA and cuDNN 
even explicitly state that determinism is not guaranteed in many settings %, %including across different versions or hardware configurations 
\cite{nvidia2025cublas, nvidia2025cudnn}. 

Model-driven compression requires identical model instances at the encoder and decoder ends. 
When these run on different hardware or software, non-determinism can produce slightly different conditional distributions, which often results in failures (nonsensical decompressed files). 

%For arithmetic coding, these deviations are typically fatal: a single disagreement in symbol probabilities can cause an incorrect decoding decision, which then corrupts the context for all subsequent symbols. Thus, classical arithmetic coding offers essentially no robustness to the types of prediction mismatch induced by modern learned models.

\subsection{Previous Work}

Recently, an algorithm was introduced called PMATIC (Probability Matched Interval Coding) \cite{pmatic1_arxiv} to address the problem of prediction mismatch in model-driven compression. PMATIC handles prediction mismatch by quantizing the space of probability distributions before applying arithmetic coding, so the encoder and decoder will quantize to matching values. PMATIC sends `helper bits' to handle the uncommon cases where the encoder and decoder are at risk of falling into different quantization bins. However, PMATIC is fundamentally based on arithmetic coding, and thus inherits its extreme sensitivity to numerical precision. Additionally, the robustness provided by PMATIC is theoretically limited to relatively rigid mismatch models and small amounts of mismatch (logit error of at most $\approx 0.02$). Lastly, PMATIC does not flexibly accommodate other statistical properties encountered in real-world learned inference pipelines.

%If the mismatch between the encoder and decoder distributions is arbitrarily large, exact recovery is impossible. However, in many practical settings, these mismatches are small and structured. When the two distributions are known to be close, it may be possible for the encoder and decoder to exploit this closeness to agree on a third, shared distribution that both can compute reliably. This robustness necessarily incurs a cost: additional information must be transmitted to ensure agreement, and the resulting distribution may be less well matched to the true source than the original predictions.

%We refer to the problem of constructing such a shared distribution while minimizing the resulting loss in compression efficiency as \emph{probability matching}. In this work, we study probability matching as a fundamental robustness problem in model-driven lossless compression, with particular emphasis on settings motivated by modern neural predictive models.

%\subsection{Background Literature and Previous Work}

%Describe background and related work here. Can look at ICLR paper and use similar references. 

%TODO: Also cite ICLR submitted paper on PMATIC here and discuss ways this work is better (e.g, can handle large errors). Need to decide if we will put the ICLR on arxiv. 

\subsection{Main Contributions}

In this work, we develop and analyze a new algorithm for handling structured mismatch in predictions between the encoder and decoder when used for model-driven compression. Unlike PMATIC, the algorithm does not use arithmetic coding and is instead more similar to Huffman coding. While our algorithm assumes a similar error structure for the majority of token probabilities as PMATIC, it relaxes some of the restrictions PMATIC needs; our algorithm has the flexibility the adjust to different statistical assumptions, especially for the most probable tokens. We prove the theoretical correctness of our algorithm given assumptions on the mismatch structure; bound the expected encoding length; and analyze optimal parameter settings.

We also validate our algorithm with experiments on real data in order to evaluate compression performance. We show that our algorithm's compression ratio out-performs gzip, a current industry standard. Since our theoretical assumptions do not always hold with real data, we provide experiments evaluating accuracy on a typical setting with non-determinism. 

\ifshortversion
This paper is organized as follows: The problem setting is presented in \Cref{sec::setting}. Our algorithm is given in \Cref{sec::algorithm}, along with a theoretical correctness proof. Analysis of our algorithm is given in \Cref{sec::performance} and experiments are shown in \Cref{sec::experiment_cordelia}. Analysis of optimal parameters is omitted due to space constraints. 

\else
This paper is organized as follows: the problem setting is presented in \Cref{sec::setting} and our algorithm is given in \Cref{sec::algorithm}. Analysis of our algorithm is given in \Cref{sec::performance} and \Cref{sec::interval}. Experimental work is shown in \Cref{sec::experiment_cordelia}.
\fi

\subsection{Notation}
\label{sec::notation}
%
%We will define the following notation.
%
%The following symbols are common information-theoretic functions that will be used often in this paper.
%\begin{itemize}
%\item $\bbE[X]$ represents the expected value (or expectation) of a random variable $X$. 
%\item $\bbP[X]$ represents the probability of some random event $X$
%\item $H[X]$ represents the (base-2) entropy of some random variable $X$. In other words, $H[X] = \bbE[-\log _2(X)]$
%\end{itemize}

Since file sizes are denoted in bits, for this work $\log$ is base-$2$ by default, and the natural logarithm will be denoted by $\ln$.
We denote the entropy of random variable $X$ over a finite set $S$ by $H[X] = \sum_{x \in S} -\bbP[X = x] \log(\bbP[X = x]) $. 
\ifshortversion
\else
For continuous variables $X$ with PDF $f(x)$, we express the differential entropy as $h[X] = - \int_{S} f(x) \log f(x) \,dx$.
\fi

We also define some key binary string operations. For binary strings $a = a[1] \dots a[k]$ and $b = b[1] \dots b[\ell]$:
\begin{itemize}
    \item We denote the bitwise negation of $a$ by $\overline{a}$.

    \item We denote the length of $a$ by $|a|$. For integers $1 \leq i \leq j \leq k = |a|$, we denote the substring from the $i$th to $j$th index (inclusive) by $a[i:j] := a[i] \dots a[j]$.

    \item $a \oplus b = a[1] \dots a[k] b[1] \dots b[\ell]$ denotes the concatenation of $a$ and $b$ (in order); for a concatenation of a sequence of many binary strings $a_m, a_{m+1}, \dots, a_n$, we write $\bigoplus_{i=m}^n a_i := a_m \oplus a_{m+1} \oplus \dots \oplus a_n$.
    
    \item $a \star b = \max(j : a[i] = b[i] \text{ for all } i \leq j)$ denotes the length of the longest prefix shared by $a$ and $b$.

\end{itemize}

\section{Problem Setting and Mismatch Assumptions}

\label{sec::setting}

%$$p(x) = p(x| x_1^{t-1})$$

%Describe problem scenario (in precise mathematical terms).   

%Put in any assumptions we have about the LLM outputs for token probabilities (e.g power law).

We assume that the data has already been tokenized into a sequence of $M$ tokens $x_1, \dots x_M$, taken from a token alphabet $\mathcal{X}$.
The encoder outputs a binary string $Y$ representing the text in compressed form. In this work, $Y$ can be expressed as $\bigoplus_{i = 1}^M y_i$, where each $y_i$ is a binary string representing the encoding of token $x_i$.
The decoder decompresses $Y$ to obtain a sequence of tokens $\hat x_1, \hat x_2, \dots, \hat x_{\hat M}$, and is correct iff $\hat M = M$ and $\hat x_i = x_i$ for each $i \in \{1, \dots, M\}$. 

Predictive models take the \emph{context} of a token $x_i$, composed of the previous tokens $x_1^{i-1} = x_1, \dots x_{i-1}$, and produce a probability distribution $p(x = x_i | x_1^{i-1})$ for all $x \in \mathcal{X}$. For brevity, we leave the context $x_1^{i-1}$ implicit and denote by $p(x)$ and $p'(x)$ the probability distributions produced for the encoder and decoder, respectively.

For our compression to be possible, we must have $p(x) \approx p'(x)$. Specifically, we assume that there is some $c \geq 1$, denoting maximum discrepancy, such that
\begin{align}
\label{eq::assumption_bounded}
\left|\ln \left(p(x)\right) - \ln \left( p'(x) \right)\right| < \ln c \quad \text{for all } x \in \mathcal{X}
\end{align}
This is similar but not identical to the mismatch assumption in \cite{pmatic1_arxiv}, which assumes a bounded \emph{additive} mismatch for the prediction \emph{logits} (from which the probabilities are derived using the softmax function).

\section{Algorithm}

\label{sec::algorithm}

We first give an overview of the algorithm, which directly uses the assumption that the encoder and decoder have (multiplicatively) similar values for the probability of each token. Note that this overview leaves out important details; the complete description is given in \Cref{sec::encoder,sec::decoder}.

First, the following parameters are set in advance for both the encoder and decoder:
\begin{itemize}
    \item The amount of mismatch certified against is given by $c$.
    \item Each token $x \in \cX$ is assigned a unique fixed-length binary string $B(x) \in \{0,1\}^\ell$, where $\ell = \lceil \log \cX \rceil$. We refer to $B(x)$ as the \emph{longform} of $x$.
    \item A partition of $[0,1]$ into a set of intervals $\cR = \{ r_1, r_2, \dots, r_{|\cR|} \}$, called \emph{buckets}. We say that a bucket $r_k = (r_{k-}, r_{k+}] \in \cR$ \emph{contains} a token $x$ if $p(x) \in r_k$ (or, for the decoder, if $p'(x) \in r_k$).\footnote{Each bucket except the one containing $0$ is open below and closed above, in order to partition $[0,1]$, with $r_{(k-1)+} = r_{k-}$ for all $k$.} These $|\cR|$ buckets are represented by prefix-free binary code $A(\cdot)$.
\end{itemize}
The algorithm will work for any choice of parameters $c$, $B$, and $\cR$, though some choices of $B$ and particularly $\cR$ will result in improved performance relative to others.\footnote{For convenience, we assume in our analysis that $B(x)$ are assigned randomly; however, they could also be assigned strategically to improve average performance. 
\ifshortversion
\else
We discuss optimal selection of $\cR$ in \Cref{sec::interval}.
\fi}

Given the next-token predictions and true next token $x_i$, the encoder uses $A(\cdot)$ to communicate which bucket contains $x_i$. Then the encoder considers an expanded version (by a factor of $c^2$) of that bucket, containing all tokens that could fall into that bucket on the decoder end, and computes the minimum prefix of the longform of $x_i$ \emph{not} shared by any other token in the expanded bucket, and sends this prefix as well.
The decoder identifies and looks in its corresponding bucket (expanded by $c$) and finds the token $x_i$ matching the given prefix, which is guaranteed to be unique under the probability mismatch bound.

The overall cost of sending $x_i$ is the amount of information needed to specify the bucket, plus the length of the minimum unique prefix. Since the bucket specification will greatly narrow down the set of potential next tokens, the minimum unique prefix will likely be substantially shorter than the full token longform (even when the bucket specification length is added), thus compressing the message.

\subsection{Encoder}
\label{sec::encoder}

We examine the behavior of the encoder at a single token $x_i$ in the input $x_1, \dots, x_M$ given context $x_1^{i-1}$ (which it has already encoded into the string $\bigoplus_{j=1}^{i-1} y_j$). We denote the next-token probability (conditional on $x_1^{i-1}$, which is left implicit) as $p(x)$. The encoder will then generate some $y_i$ which represents the true token $x_i$ and append it to the encoded string. Iterating over $i$ produces the final encoded string $Y$.

\subsubsection{Encoding Subprocess}

Let $r_{k(i)} = (r_{k(i)-}, r_{k(i)+}] \in \cR$ be the bucket containing $p(x_i)$, and let
\begin{align}
    U_i := \left \{x \in \cX \setminus \{x_i\} : \frac{r_{k(i)-}}{c^2} < p(x) < c^2 \cdot r_{k(i)+} \right \} \,,
\end{align}
representing the set of other tokens, not including $x_i$, whose probabilities are `not too far' from the range represented by $r$. The encoder then computes the maximum shared prefix length between $x_i$ and $x \in U$, denoted as:
\begin{align}
\label{eq::initial_mj}
m_i = \max_{x \in U_i} (B(x) \star B(x_i))
\end{align}
Then, the encoded binary string representing $x_i$ is
\begin{align}
    \label{eq::yi_def}
    y_i = A(r_{k(i)}) \oplus B(x_i)[1:m_i+1] \oplus \overline{B(x_i)[m_i+2]}\,.
\end{align}
Intuitively, $A(r_{k(i)})$ is the string representing bucket $r_{k(i)}$ in the prefix-free code; the string $B(x_i)[1:m_i+1]$ is used to uniquely identify $x_i$ from $U_i$; and the final bit $\overline{B(x_i)[m_i+2]}$ is used to find the end of $y_i$ and thus where to start $y_{i+1}$.

\subsection{Decoder and correctness proof}
\label{sec::decoder}

%For the decoder to correctly decode $Y = \bigoplus_{i = 1}^M y_i$ into the original message $x_1, \dots, x_M$, it must be able to do the following for each $i$: given the context $x_1^{i-1} = x_1, \dots, x_{i-1}$ (which we assume it has decoded correctly) and correct starting point for $y_i$ in string $Y$, it must be able to recover the correct $x_i$ and then identify the starting point for $y_{i+1}$. If it is able to do that for each $i$, then by induction on $i$ it will successfully decode the full message $x_1, \dots, x_M$.

The decoder decodes string $Y = \bigoplus_{i = 1}^M y_i$ by iteratively decoding substrings $\hat{y}_1, \dots, \hat{y}_{\hat{M}}$ into tokens $\hat{x}_1, \dots, \hat{x}_{\hat{M}}$. For brevity and notational simplicity, we will assume that at decoding step $i$, the decoder has correctly decoded the tokens $x_1, \dots, x_{i-1}$ and identified their corresponding sub-strings $y_1, \dots, y_{i-1}$; we will simultaneously prove that, given Assumption \eqref{eq::assumption_bounded}, this is the case via induction (i.e. by showing it holds at each successive step). Thus, when decoding $x_i$, the decoder knows the context $x_1^{i-1}$. From this, it determines its next-token predictions $p'(x)$, as well as the correct \emph{remaining} message $Y_i := \bigoplus_{j=i}^M y_j$ that it still needs to decode.

%Thus, when decoding $\hat{x}_i$, the decoder already knows $\hat{y}_1, \dots, \hat{y}_{i-1}$ and hence the starting point for $\hat{y}_i$, which will be at bit $\ell_i := \big(\sum_{j=1}^{i-1} |\hat{y}_j|\big) + 1$. Thus, let $Y_{\ell_i}^{|Y|}$ be the message starting from the $\ell_i$th bit, which is the portion of the message yet to be decoded after $\hat{x}_1^{i-1}$ have been decoded. The decoder also has the next token prediction $p'(x)$ (conditioned on $\hat{x}_1^{i-1}$).

\subsubsection{Decoder Subprocess}

The decoder needs to use $p'(x)$ and $Y_i$ to determine the following: (i) the token to decode $\hat x_i$, and (ii) the length $|y_i|$, so that it can compute $Y_{i+1}$ for the next step. In order to do this, it does the following:
\begin{itemize}
    \item Finds bucket $r_{k(i)} \in \cR$ by applying prefix-free code $A(\cdot)$ to $Y_i$; since $Y_i$ begins with $y_i$, which begins with $A(r_{k(i)})$, the unique matching element of $A(\cdot)$ to $Y_i$ is $A(r_{k(i)})$. Thus (assuming all previous steps are correct) the decoder recovers the bucket $r_{k(i)}$ used by the encoder at step $i$.
    \item The decoder then removes $A(r_{k(i)})$ from $Y_i$ to obtain
    \begin{align}
    \label{eq::def_Zj}
        Z_i = B(x_i)[1:m_i + 1] \oplus \overline{B(x_i)[m_i + 2]} \oplus Y_{i+1}
        \\ \text{and} \quad \hat{U}_i = \left\{ x \in \mathcal{X}: x \in \left(r_{k(i)-}/c, c \cdot r_{k(i)+}\right) \right\}
    \end{align}
    \item Decodes $\hat x_i = \argmax_{x \in \hat{U}_i} B(x) \star Z_i$.
    \item Determines $Y_{i+1}$ by finding the bit $\overline{B(x_i)[m_i + 2]}$, which is the first bit to \emph{disagree} between $Z_i$ and $B(x_i)$ and setting $Y_{i+1}$ to be all subsequent bits from $Y_i$.
\end{itemize}
It is easy to see that all steps besides decoding $x_i$ itself succeed, provided that all prior steps were successful. Thus, we only need to show that:
\begin{proposition}
\label{prop::correctness}
Given Assumption \eqref{eq::assumption_bounded} and the correctness of all prior steps in the decoding,
\begin{align}
    x_i = \argmax_{x \in \hat{U}_i} B(x) \star Z_i \,.
\end{align}
\end{proposition}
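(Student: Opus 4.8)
The plan is to reduce \Cref{prop::correctness} to three observations, each immediate from Assumption~\eqref{eq::assumption_bounded} and the constructions of $U_i$, $\hat U_i$, and $Z_i$: (a) $x_i \in \hat U_i$; (b) $\hat U_i \setminus \{x_i\} \subseteq U_i$; and (c) for every $x \in U_i$, $B(x) \star Z_i \le m_i < m_i + 1 = B(x_i) \star Z_i$. Together, (a)--(c) say that $x_i$ itself belongs to $\hat U_i$ and strictly outscores every other member of $\hat U_i$ in the quantity $B(\cdot) \star Z_i$, which yields both the claimed identity and uniqueness of the maximizer.

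For (a) and (b) I would first rewrite \eqref{eq::assumption_bounded} multiplicatively as $p'(x)/c < p(x) < c\, p'(x)$ (equivalently $p(x)/c < p'(x) < c\, p(x)$) for every $x \in \mathcal{X}$. For (a): since $p(x_i) \in r_{k(i)} = (r_{k(i)-}, r_{k(i)+}]$, the bound gives $r_{k(i)-}/c < p(x_i)/c < p'(x_i) < c\, p(x_i) \le c\, r_{k(i)+}$, so $p'(x_i)$ lies strictly inside the open interval defining $\hat U_i$. For (b): if $x \in \hat U_i$ then $r_{k(i)-}/c < p'(x) < c\, r_{k(i)+}$, and applying the mismatch bound once more yields $r_{k(i)-}/c^2 < p'(x)/c < p(x) < c\, p'(x) < c^2\, r_{k(i)+}$, which is exactly the defining inequality of $U_i$; hence $x \in U_i$ as soon as, in addition, $x \ne x_i$. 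This step is where the encoder's $c^2$-dilation is needed: one factor of $c$ cancels the decoder's $c$-dilation of the bucket, and the other absorbs the encoder--decoder mismatch, so the encoder's set $U_i$ (which it can compute) provably covers the decoder's set $\hat U_i$ (which it cannot) up to $x_i$.

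For (c) I would read the values of $B(\cdot) \star Z_i$ directly off \eqref{eq::def_Zj}. Since $Z_i = B(x_i)[1:m_i+1] \oplus \overline{B(x_i)[m_i+2]} \oplus Y_{i+1}$, the string $Z_i$ agrees with $B(x_i)$ in exactly its first $m_i+1$ positions and disagrees at position $m_i+2$, so $B(x_i) \star Z_i = m_i+1$. For $x \in U_i$, set $j := B(x) \star B(x_i)$; by the definition of $m_i$ in \eqref{eq::initial_mj} we have $j \le m_i$, and since $B$ assigns pairwise-distinct length-$\ell$ longforms, $B(x)$ and $B(x_i)$ genuinely disagree at position $j+1 \le m_i+1$, a position where $Z_i$ still coincides with $B(x_i)$; hence $B(x) \star Z_i = j \le m_i < m_i+1 = B(x_i) \star Z_i$. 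Combining (a)--(c) completes the proof of the proposition (and thus the induction step for decoding correctness).

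I expect the only delicate points to be bookkeeping rather than conceptual: tracking strict versus non-strict inequalities (the buckets $r_k$ are half-open while $U_i$ and $\hat U_i$ are open and \eqref{eq::assumption_bounded} is strict, which is exactly what makes all the containments hold for $c > 1$), and the degenerate cases $U_i = \varnothing$ (which needs a convention for $m_i$, after which $\hat U_i \subseteq \{x_i\}$ and the claim is trivial) and $m_i + 2 > \ell$ (where $B(x_i)[m_i+2]$ would fall past the end of the longform; distinctness of longforms still forces $m_i \le \ell-1$, so $B(x_i)\star Z_i = m_i+1$ holds, but the trailing \emph{disagreement bit} in \eqref{eq::yi_def} must be handled separately). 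The core argument is otherwise a short prefix-code manipulation.
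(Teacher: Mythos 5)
Your proposal is correct and follows essentially the same route as the paper: establish $x_i \in \hat U_i$ and $\hat U_i \setminus \{x_i\} \subseteq U_i$ from the multiplicative mismatch bound, then read $B(x_i)\star Z_i = m_i+1$ and $B(x)\star Z_i \le m_i$ for all other candidates off the definitions of $m_i$ and $Z_i$. Your treatment is in fact slightly more careful than the paper's on the strict/non-strict interval endpoints and the degenerate cases ($U_i = \varnothing$, $m_i+2 > \ell$), but the underlying argument is identical.
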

Since $B(x_i) \star Z_i = m_i+1$, this holds if we can show that
\begin{align}
\label{eq::others_less_mj}
\max_{x \in \hat U_i \setminus x_i} B(x) \star Z_i \leq m_i\,.
\end{align}

To prove this, we use the following lemmas:
\begin{lemma}
    Given Assumption \eqref{eq::assumption_bounded}, $x_i \in \hat U_i$
\end{lemma}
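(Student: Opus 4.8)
The plan is to unpack the definitions and apply the mismatch bound \eqref{eq::assumption_bounded} directly. By definition, $\hat U_i = \{x \in \mathcal{X} : p'(x) \in (r_{k(i)-}/c,\ c\cdot r_{k(i)+})\}$, so it suffices to show that $p'(x_i) > r_{k(i)-}/c$ and $p'(x_i) < c\cdot r_{k(i)+}$. We know from the encoder's construction that $r_{k(i)} = (r_{k(i)-}, r_{k(i)+}]$ is the bucket containing $p(x_i)$, i.e. $r_{k(i)-} < p(x_i) \le r_{k(i)+}$. Assumption \eqref{eq::assumption_bounded} gives $|\ln p(x_i) - \ln p'(x_i)| < \ln c$, which upon exponentiating becomes $p(x_i)/c < p'(x_i) < c\cdot p(x_i)$.

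Chaining these inequalities is the whole argument: from $p'(x_i) < c\cdot p(x_i)$ and $p(x_i) \le r_{k(i)+}$ we get $p'(x_i) < c\cdot r_{k(i)+}$; from $p'(x_i) > p(x_i)/c$ and $p(x_i) > r_{k(i)-}$ we get $p'(x_i) > r_{k(i)-}/c$. Hence $p'(x_i)$ lies in the open interval $(r_{k(i)-}/c,\ c\cdot r_{k(i)+})$, which is precisely the membership condition for $\hat U_i$, so $x_i \in \hat U_i$.

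There is essentially no obstacle here — the statement is almost immediate once the definitions are expanded. The only points requiring a word of care are: (i) the strictness of the inequality in \eqref{eq::assumption_bounded}, which ensures we land in the open interval rather than merely its closure; and (ii) making sure we have the correct bucket endpoints, which is guaranteed by the induction hypothesis that all prior decoding steps were correct (so the decoder has recovered the same bucket $r_{k(i)}$ the encoder used). I would keep the write-up to two or three lines, since this lemma is really just a sanity check that the decoder's expanded bucket $\hat U_i$ is large enough to contain the true token — the substantive work is in the companion lemma bounding $\max_{x \in \hat U_i \setminus x_i} B(x) \star Z_i$, which controls the tokens that could be confused with $x_i$.
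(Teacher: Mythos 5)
Your proposal is correct and matches the paper's proof: both exponentiate the mismatch bound to get $p(x_i)/c < p'(x_i) < c\cdot p(x_i)$ and chain with the bucket endpoints $r_{k(i)-} < p(x_i) \le r_{k(i)+}$. If anything, your remark about strictness is slightly more careful than the paper's write-up, which states the conclusion with a half-open interval even though $\hat U_i$ is defined via an open one.
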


%%% EDIT STUFF WE DID WITH SPACING
\begin{proof}
    By Assumption \eqref{eq::assumption_bounded} (bounded mismatch assumption), $p'(x_i) \in (p(x_i)/c, c \cdot p(x_i)]$. Since $p(x_i) \in (r_{k(i)-}, r_{k(i)+}]$, this means $p'(x_i) \in (\frac{r_{k(i)-}}{c}, c \cdot r_{k(i)+}]$. Thus, $x_i \in \hat U_i$. 
\end{proof}

\begin{lemma}  Given Assumption \eqref{eq::assumption_bounded}, 
    $\hat U_i \setminus \{x_i\}\subseteq U_i$\,.
\end{lemma}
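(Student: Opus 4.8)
The plan is a direct containment argument: take an arbitrary $x \in \hat U_i \setminus \{x_i\}$ and show it satisfies the two defining inequalities of $U_i$. Since $x \ne x_i$ by assumption, it remains only to verify the probability bound $\frac{r_{k(i)-}}{c^2} < p(x) < c^2 \cdot r_{k(i)+}$.

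First I would unpack the hypothesis. Membership of $x$ in $\hat U_i$ means (reading the interval in \eqref{eq::def_Zj} as a constraint on $p'(x)$) that
\begin{align}
\frac{r_{k(i)-}}{c} < p'(x) < c \cdot r_{k(i)+}\,.
\end{align}
Next I would invoke Assumption \eqref{eq::assumption_bounded} at the token $x$: from $|\ln p(x) - \ln p'(x)| < \ln c$ we get $\frac{p'(x)}{c} < p(x) < c \cdot p'(x)$. Chaining the two pairs of inequalities gives
\begin{align}
p(x) > \frac{p'(x)}{c} > \frac{r_{k(i)-}}{c^2}
\quad\text{and}\quad
p(x) < c \cdot p'(x) < c^2 \cdot r_{k(i)+}\,,
\end{align}
which is exactly the condition for $x \in U_i$. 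Hence $\hat U_i \setminus \{x_i\} \subseteq U_i$.

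There is no real obstacle here; the only thing to be careful about is bookkeeping with the interval endpoints. Both $U_i$ and $\hat U_i$ are defined via strict (open) inequalities, and the mismatch bound in \eqref{eq::assumption_bounded} is also strict, so no edge cases arise at the bucket boundaries $r_{k(i)-}$ and $r_{k(i)+}$, and the factor-of-$c$ expansions compose cleanly into the factor-of-$c^2$ expansion defining $U_i$. (If one wanted, the same argument works with any consistent choice of open/closed endpoints, since a strict inequality is weaker than the corresponding non-strict one.)
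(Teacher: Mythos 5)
Your proof is correct and is essentially the same argument as the paper's: the paper phrases it as a proof by contradiction (assuming $x \in \hat U_i$ but $x \notin U_i$ forces $p(x)$ and $p'(x)$ to differ by more than a factor of $c$), while you chain the same inequalities directly. Your remark about the strictness of the endpoints is accurate and handles the minor open/closed inconsistencies in the paper's interval notation.
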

\begin{proof}
    Suppose for the sake of contradiction there exists some token $x \neq x_i$ such that $x \in \hat U_i, x \not \in U_i$. Then $p'(x) \in (\frac{r_{k(i)-}}{c}, c \cdot r_{k(i)+}]$, but $p(x) \not \in (\frac{r_{k(i)-}}{c^2}, c^2 \cdot r_{k(i)+}]$. This would imply that $p(x)$ and $p'(x)$ are more than a factor of $c$ away, contradicting the assumption.
\end{proof}

Thus, we know $\hat{U_i} \setminus \{x_i\} \subseteq U_i$, so by \eqref{eq::initial_mj} we get
\begin{align}
            \max_{x \in U_i} B(x) \star B(x_i)[1:m_i + 1] = m_i
\\ \implies \max_{x \in U_i} B(x) \star Z_i = m_i
\\ \implies \max_{x \in \hat{U_i} \setminus \{x_i \}} B(x) \star Z_i \leq m_i
\end{align}
so $x_i$ indeed maximizes $B(x) \star Z_i$ in $\hat{U}_i$.

Thus, we have shown that, given Assumption \eqref{eq::assumption_bounded}, all steps compute the correct values given that all previous steps are correct and hence by induction our algorithm will correctly decode the original message.

\section{Performance Analysis}

\label{sec::performance}

%Put our theory results here. 

%Correctness of algorithm (though I guess that isn't performance analysis...perhaps reconsider later.)

%Collision analysis.

Since the compressed string $y_i$ representing token $x_i$ is given by \eqref{eq::yi_def}, its length is $|y_i| = |A(r_{k(i)})| + m_i + 2$. In order to optimize the compression ratio, we want to minimize 
\begin{align}
\bbE[|Y|] = \sum_{i = 1}^M \bbE[|y_i|] = \sum_{i = 1}^M \bbE[|A(r_{k(i)})|] + \bbE[m_i] + 2\,.
\end{align}
We consider a term $\bbE[|y_i|] = \bbE[|A(r_{k(i)})|] + \bbE[m_i] + 2$. First, we note that once the buckets $\cR$ are set, we can gather statistics to determine the frequency of each bucket; then, we can optimize the prefix-free code $A(\cdot)$ by using a Huffman coding. This yields
\begin{align}
    \bbE[|A(r_{k(i)})|] - H\left[\{ p(x_i) \in r_k\}_{k = 1}^{|\mathcal{R}|}\right] \in [0, 1)\,.
\end{align}

Evaluating $m_i = \text{max} _{x \in U_i} B(x) \star B(x_i)$ is trickier. Since each token's longform $B(x)$ is assigned randomly, the expression for $m_i$ should depends on $|U_i|$.

In the following analysis, we use a slightly different implementation of the $B(x)$ table, with rounds of appending one random bit to each entry on this table until all values of $B(x)$ are distinct; thus making every entry in $B(x)$ independent. 

%\subsection{Suffix Length}

\begin{proposition}
\label{lem::collision}
Suppose we have a finite set $U$ of $|U|$ independent, randomly selected infinite binary strings, as well as another string $b \not \in U$. Define $\eta = \max _{b' \in U} b \star b'$. Then $\bbE[\eta] = \log|U| + \Theta(1)$, and 
\begin{align}
\lim_{|U| \to \infty} \bbE[\eta] - \log |U| \leq 0.33276\,.
\end{align}
\end{proposition}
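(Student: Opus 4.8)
The plan is to reduce the quantity to an elementary sum and then extract its asymptotics by a standard harmonic-sum (Mellin transform) analysis. Conditionally on $b$, by the symmetry of the uniform distribution on infinite binary strings we may treat $b \star b'$ as the number of leading symbols of $b'$ that agree with $b$, so for a single $b' \in U$ we have $\bbP[b \star b' \ge k] = 2^{-k}$. Since the $n := |U|$ strings are independent, $\bbP[\eta \le k-1] = (1-2^{-k})^{n}$, and as $\eta$ is a nonnegative integer,
\[
\bbE[\eta] \;=\; \sum_{k\ge 1}\bbP[\eta \ge k] \;=\; \sum_{k\ge 1}\bigl(1-(1-2^{-k})^{n}\bigr) \;=:\; f(n).
\]

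The claim $\bbE[\eta] = \log n + \Theta(1)$ follows from crude two-sided bounds on $f(n)$. Using $1-(1-x)^{n}\le \min(1,nx)$ and splitting the sum at $k_0=\lceil\log n\rceil$ gives $f(n)\le k_0+\sum_{k>k_0}n2^{-k}\le \log n + 2$. For the lower bound, $1-(1-x)^{n}\ge 1-e^{-nx}$, and for every $k\le \log n - t$ this is at least $1-e^{-2^{t}}$; summing over such $k$ yields $f(n)\ge(\log n - t)(1-e^{-2^{t}})$ for each fixed $t$, hence $f(n)\ge \log n - O(1)$.

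For the sharp constant, first replace $(1-2^{-k})^{n}$ by $e^{-n2^{-k}}$: since $0\le e^{-nx}-(1-x)^{n}\le e^{-nx}\,nx^{2}/(1-x)$ for $x\in[0,1)$, summing over $x=2^{-k}$ and noting that the dominant contributions come from $k\approx \log n$, where $n2^{-2k}=\Theta(1/n)$, gives $f(n)=\tilde f(n)+o(1)$ with $\tilde f(n):=\sum_{k\ge1}\bigl(1-e^{-n2^{-k}}\bigr)$. The function $\tilde f$ is a classical harmonic sum (it appears in the analysis of tries; see Flajolet--Sedgewick or Szpankowski): its Mellin transform in $n$ is $-\Gamma(s)\,\tfrac{2^{s}}{1-2^{s}}$, analytic on $-1<\Re s<0$. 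Shifting the inversion contour to the right picks up a double pole at $s=0$ and simple poles at $s=\chi_m:=2\pi i m/\ln 2$ for $m\neq 0$. Expanding locally at $s=0$ (the $\log n$ comes from $n^{-s}$, the $\gamma/\ln 2$ from $\Gamma(s)=1/s-\gamma+\cdots$, and the $-\tfrac12$ from the regular part of $2^{s}/(1-2^{s})$) shows the double pole contributes $\log n + \gamma/\ln 2 - \tfrac12$ to $\tilde f(n)$, while the poles $\chi_m$ contribute a $1$-periodic, mean-zero fluctuation $Q(\log n)$. Thus $\bbE[\eta] = \log n + \gamma/\ln 2 - \tfrac12 + Q(\log n) + o(1)$.

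It remains to bound $Q$. The residue at $\chi_m$ has modulus $\tfrac{1}{\ln 2}|\Gamma(\chi_m)|$ (as $|n^{-\chi_m}|=1$), and using $|\Gamma(iy)|^{2}=\pi/(y\sinh(\pi y))$ the terms $m=\pm1$ dominate and are of order $10^{-6}$, so $\sup_u|Q(u)|\le \tfrac{1}{\ln 2}\sum_{m\neq 0}|\Gamma(\chi_m)| < 2\cdot10^{-6}$. Since $\gamma/\ln 2-\tfrac12 = 0.332746\ldots$, we get $\limsup_{n\to\infty}\bigl(\bbE[\eta]-\log n\bigr)\le \gamma/\ln 2-\tfrac12+2\cdot10^{-6} < 0.33276$, which is the stated bound (strictly, $\bbE[\eta]-\log n$ does not converge because of the fluctuation $Q$, so the statement is to be read as a bound on the limit superior). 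The main obstacle is the contour analysis of $\tilde f$ — isolating the constant cleanly from the double pole at $s=0$ and rigorously controlling the oscillatory residue sum — whereas the approximation $f=\tilde f+o(1)$ and the $\Theta(1)$ estimates are routine.
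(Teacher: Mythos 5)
Your proposal is correct, but it reaches the constant by a genuinely different route than the paper. Both arguments start from the same exact representation $\bbE[\eta]=\sum_{k\ge1}\bigl(1-(1-2^{-k})^{|U|}\bigr)$. The paper then stays elementary: it splits the sum at $\lceil\log|U|\rceil\pm\mu$, bounds the head by $1$ per term, the tail via $1-a^x\le -x\ln a$, passes to the limit with $\omega=|U|/2^{\lceil\log|U|\rceil}$ held fixed, and obtains $0.33276$ by numerically maximizing the resulting closed-form expression over $\omega\in(0.5,1]$ with $\mu_1=\mu_2=30$. You instead depoissonize to $\sum_{k\ge1}(1-e^{-n2^{-k}})$ and run the classical Mellin/harmonic-sum analysis from the theory of tries, which identifies the constant exactly as $\gamma/\ln 2-\tfrac12\approx0.332746$ plus a mean-zero periodic fluctuation of amplitude below $2\cdot10^{-6}$. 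Your approach requires more machinery (the error bound $e^{-nx}-(1-x)^n=o(1)$ after summation, contour shifting, the double-pole expansion, and the $|\Gamma(iy)|$ estimate for the oscillatory residues), but it buys two things the paper's proof does not: an explanation of \emph{why} the numerical constant is what it is, and the observation--which applies equally to the proposition as stated--that $\bbE[\eta]-\log|U|$ does not actually converge because of the periodic term, so the claim should properly be read as a bound on the limit superior (the paper's proof implicitly handles this by bounding uniformly over $\omega$, which amounts to the same thing). All the individual steps you use check out, including the inequality $0\le e^{-nx}-(1-x)^n\le e^{-nx}\,nx^2/(1-x)$ and the residue computations at $s=0$ and $s=\chi_m$.
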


\begin {proof}

We define $k = \lceil \log |U| \rceil$, and $\omega = \frac{|U|}{2^k}$.
Notice that 
\begin{align}
\bbE[\eta] &= \sum_{i = 1} ^ \infty \bbP[\eta \geq i] = \sum _{i = 1} ^{\infty} \left[1 -  \left( 1 - 2^{-i} \right)^{|U|} \right]\,.
\end{align}

First, consider an arbitrary integer $0 \leq \mu_1 \leq k$; then since all terms of the sum are $\leq 1$, we have:
\begin{align}
\bbE[\eta] \leq (k - \mu_1 - 1) + \sum_{i = k-\mu_1}^\infty \left[1 - \left( 1 - 2^{-i} \right)^{|U|} \right]\,.
\end{align}

Let $\epsilon(x) := (1 - \frac{1}{x})^x$, which increases monotonically in $x$, and consider another arbitrary integer $\mu_2 \geq 0$.
\begin{align}
\bbE[\eta] &\leq (k - \mu_1 - 1) + \sum_{i = k-\mu_1}^\infty \left[1 -  \left( 1 - 2^{-i} \right)^{\omega 2^k} \right] \\
& = (k - \mu_1 - 1) + \sum_{i = k-\mu_1}^\infty \left[1 -  \left( \epsilon(2^i)\right)^{\omega 2^{k - i}} \right] 
\\
&\leq (k - \mu_1 - 1) + \sum _{i = k-\mu_1} ^{k+\mu_2} \left[1 -  \left( \epsilon(2^{k - \mu_1})\right)^{\omega 2^{k - i}} \right] \\
&\quad\quad + \sum _{i = k+\mu_2+1} ^{\infty} \left[1 -  \left(\epsilon(2^k) \right)^{\omega 2^{k-i}} \right]\,.
\end{align}

Notice that for all $0 < a < 1$, $1 - a^x \leq -x \ln a$ due to the function $-a^x$ being concave down. This allows us to perform a tail bound on the last term.

\begin{align}
&\sum _{i = k+\mu_2+1} ^{\infty} \left[1 -  \left(\epsilon(2^k) \right)^{\omega 2^{k-i}} \right]\\ 
&\leq \sum _{i = k+\mu_2+1} ^{\infty} \left[-\omega 2^{k-i} \ln (\epsilon(2^k)) \right]  = -\omega 2^{-\mu_2} \ln (\epsilon (2^k))
\end{align}

%\begin{align}
%\bbE[\eta] &\leq (k - \mu_1 - 1) + \sum _{i = k-\mu_1} ^{k+\mu_2} \left[1 -  \left( \epsilon(2^{k - \mu_1})\right)^{\omega \cdot 2^{k - i}} \right] \\
%& \quad \quad + \sum _{i = k+\mu_2+1} ^{\infty} \left[-\omega \cdot 2^{k-i} \ln (\epsilon(2^k)) \right]\\
%&= (k - \mu_1 - 1) + \sum _{i = k-\mu_1} ^{k+\mu_2} \left[1 -  \left( \epsilon(2^{k - \mu_1})\right)^{\omega \cdot 2^{k - i}} \right] \\
%&\quad\quad - \omega \cdot 2^{-\mu_2} \cdot\ln (\epsilon (2^k)) 
%\end{align}

Notice that, by setting the parameters $\mu_1 = \mu_2 = 0$ we already show that $\bbE[\eta] \leq k + O(1)$, proving the first part of the result. However, we seek to prove the limit as well.

\begin{align}
\lim& _{|U| \to \infty} \bbE[\eta] - \log|U|= \lim _{k \to \infty} \bbE[\eta] - \log |U| \\
&\leq
\lim_{k \to \infty} \sum _{i = k-\mu_1} ^{k+\mu_2} \left[1 -  \left( \epsilon(2^{k - \mu_1})\right)^{\omega 2^{k - i}} \right] \\
& \quad \quad - (\omega 2^{-\mu_2}) \cdot\ln (\epsilon (2^k)) - \mu_1 - 1 - \log \omega \\
& = \lim_{k \to \infty} \sum _{i = k-\mu_1} ^{k+\mu_2} \left[1 -  \left( \epsilon(2^{k - \mu_1})\right)^{\omega 2^{k - i}} \right] \\
& \quad\quad + (\omega 2^{-\mu_2}) - \mu_1 - 1 - \log \omega 
\end{align}
\begin{align}
&= \lim_{k \to \infty} \sum _{i = -\mu_1} ^{\mu_2} \left[1 -  e^{\omega 2^{-i}} \right] 
+ \omega 2^{-\mu_2} - \mu_1 - 1 - \log \omega \\
&= \sum _{i = -\mu_1} ^{\mu_2} \left[1 -  e^{\omega 2^{-i}} \right] 
+ \omega 2^{-\mu_2} - \mu_1 - 1 - \log \omega \,.
\end{align}

By setting both $\mu_1$ and $\mu_2$ to sufficiently large constants (we set them to $30$), we find numerically that for all values $\omega \in (0.5, 1]$ this expression is less than $0.33276$.
\end{proof}

Notice that our code for $x_i$ after the prefix code $A(r_{k(i)})$ has length $m_i + 2$. From \Cref{lem::collision}, $\bbE[m_i] \approx 0.33276 + \log |U_i|.$ Thus, the expected length is approximately $2.33276 + \log |U_i|$. While we lose approximately $2$ bits compared to the entropy of a uniform distribution on $|U_i| + 1$ tokens, we gain the ability to tolerate mismatches.

\ifshortversion % interval analysis

\begingroup
% \color{blue}
Using the above result together with the expected length of the prefix code for $\cR$ will give a bound on the expected length of our code. Due to space constraints, we have omitted analysis on how to determine $\cR$. A short summary of this analysis is that we used a power-law assumption on the distribution of tokens (justified by experiments) and computed the parameter choice which optimizes the compression ratio under this assumption.
\endgroup

\else

\section{Interval Selection Analysis}

\label{sec::interval}

We will now discuss the best way to select bucket probabilities. In order to do this, however, we must investigate the probability distributions produced by our predictive model.  %his will allow us to estimate given buckets $\mathcal{R}$. 
This way, we can choose the set of buckets $\mathcal{R}$ that minimizes the codelength given by $\bbE[|y_i|] = \bbE[A(r_{k(i)})] + \bbE [m_i] + 2$.

Given a token $x_i$ and a probability distribution, we will write this token's \emph{ranking} as $t_i$. (So if $x_i$ is the most likely token, then $t_i = 1$). 
We will use $p^*$ to represent a model's probability distribution which takes as a parameter a token's ranking $t$. 

%\jennifer{It's confusing to start talking about things related to the power law before you explain why. The remainder of this paragraph should go after the power law discussion. }

We claim that the token distributions (specifically from the Meta Llama 3.1 model which we will be using in our experiments) approximately follow a power law distribution. In particular, the probability of a token, $p^*(t)$, should be proportional to some power of the rank of the token within the probability distribution. In other words, $p^*(t) \propto t ^ {-\alpha}$ for some $\alpha > 0$.

This is a reasonable model to assume due to Zipf's Law and other theories for word frequency, which assert that, within a given text, the frequency of a word is inversely proportional to its rank (\cite{zipf1, zipf2}). This corresponds to a power law distribution with $\alpha \approx 1$.

We assume that when models like LLMs predict words using context, this distribution is still a power law, but one with a steeper probability curve ($\alpha > 1$). The steeper curve is because we expect LLMs to provide a prediction which is more confident in a smaller set of words, meaning the prediction for LLMs should be more concentrated on higher-probability rankings.) 

Next, we will demonstrate empirical results that support this hypothesis.

%\subsection{Token Distribution Intuition}

%One common method of text compression utilizes a Huffman encoding using word frequencies. Since word frequencies can be thought of as an estimate of the next-token probabilities of each token. \jennifertwo{Don't think this previous paragraph is useful. Also I don't know how much huffman coding is used in practice for text.}

\subsection {Experimental Verification and Measurement of the Power Law}
\begin{figure}
    \centering
    \includegraphics[width=0.9\linewidth]{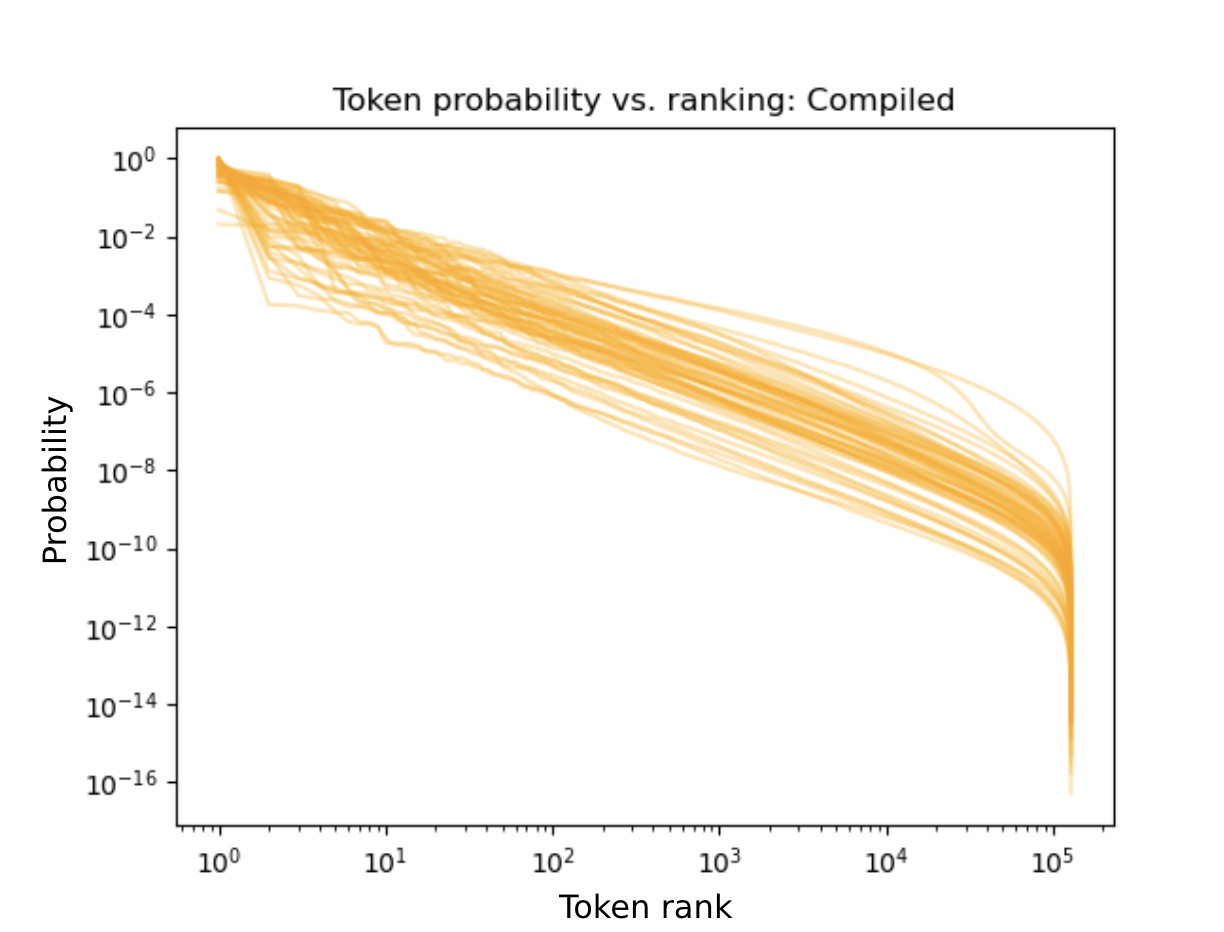}
    \caption{ Fifty probability distributions produced by Meta Llama 3.1, overlayed onto each other; snippets of Hamlet were provided as context.}
    \label{fig::loglog_prob}
\end{figure}

To determine if our power law assumption  is reasonable, we ran our predictive model (Meta Llama 3.1) using snippets of text from \emph{Hamlet} by Shakespeare. Plotting rank versus probability in a log-log plot (see \Cref{{fig::loglog_prob}}) reveals three regions of the probability distribution.

\begin{itemize}

\item The \emph{head} - this is where the top $5$ or so tokens with the highest probability (usually at least $10^{-2}$ and often close to $1$) lie. In many distributions, there is a huge dropoff between the probabilities of the most probable tokens and the rest.

\item The \emph{body} - this is where tokens with rank from $6$ to $10^5$ (inclusive) lie. They mostly follow a power law, and have probabilities usually ranging from $10^{-10}$ to $10^{-2}$ (though this depends on the actual context provided.)

\item The \emph{tail} - This is where the tokens with ranks $10^5 + 1$ and above lie. Their probabilities drop to $0$ very quickly.

\end{itemize}

While the body is the only part of the graph which obeys the power law, we decided to study it further due to its depth in token rank and high cumulative probability.

We conducted log-log regressions on the token probability distributions produced by our model using randomly selected Wikipedia aritcles as input. We only took the tokens lying in the top ten percent of token ranks (the most probable $12,825$ tokens) to specifically focus on the body. These log-log regressions allowed us to verify the strength of the power law associations and approximate the values of $\alpha$ in the bodies of each distribution.

\begin{figure}
    \centering
    \includegraphics[width=0.9\linewidth]{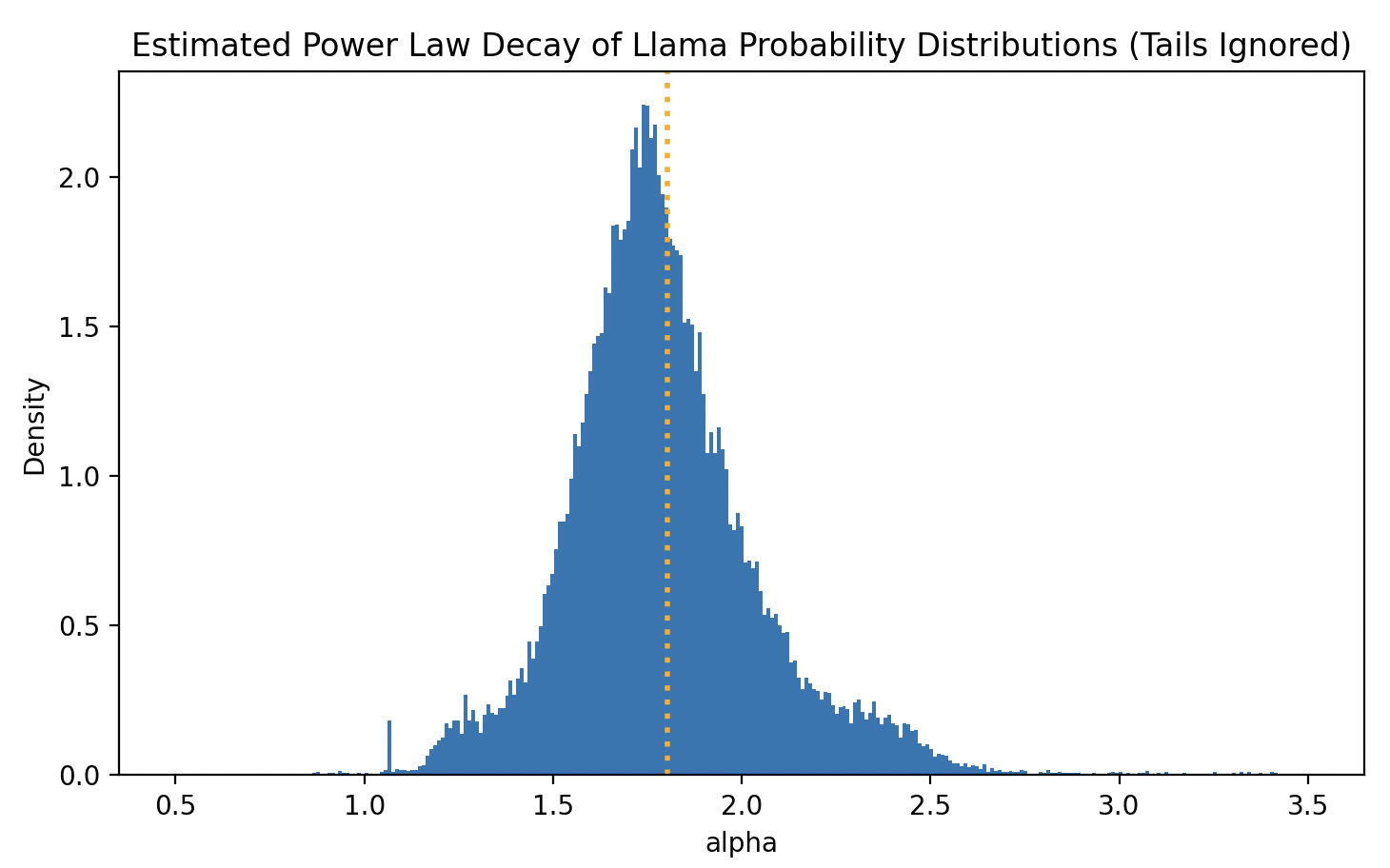}
    \caption{Regression slopes of over $100,000$ LLM probability distributions, with context from random Wikipedia articles. Slopes are estimated from linear regression after logarithmically transforming both rank and probability.}
    \label{fig:alpha}
\end{figure}

Each distribution's body strongly resembles a power law (statistical tests give mean $r^2 = 0.923$), with $\alpha$ clustering around $1.8$ (see \Cref{fig:alpha}) . Given this strong correlation, we proceeded to use our power law assumption to determine our buckets $\cR$.

\subsection{Finding Optimal Buckets}

\subsubsection{Notation and Assumptions}

We note that, though our empirical tests, we see different $\alpha$ values governing our power law approximation depending on the context given to the model. In order to come up with buckets $\mathcal{R}$, we will make some simplifying assumptions about our distribution. 

Consistent with how our encoder is defined in \Cref{sec::algorithm}
, we will assume that every probability distribution has the same shape, meaning the the same probability values are used, but the token identities can be permuted. 
For this unchanging probability shape, we want to use the power law 
\begin{align}
p^*(t) \propto t^{-\alpha} \,.
\end{align}

To make our optimization problem tractable to analytic solutions, we will additionally assume that our distribution $p^*$ is \emph{continuous}. Instead of having a set of discrete tokens, we will have a space of tokens (sorted by rank). 
This is fitting, as body token probabilities seem to match a continuous distribution quite well, and the number of tokens in the body makes the approximation quite reasonable.
Also, recall we are simplifying a collection of unknown distributions as one distribution which will already incur some amount of approximation loss; additional discrepancies between the continuous and discrete distribution are likely not significant in comparison.  

The continuous power law we will use has PDF 
\begin{align}
\label{eq::p_star_def}
p^*(t) = (\alpha - 1) \cdot t^{-\alpha}
\end{align}
defined on the space $t \in [1, \infty)$. Note that the $\alpha - 1$ coefficient comes from the fact that we want to scale $p^*$ such that $\int_1 ^{\infty} p^*(t) dt = 1$. %We do not expect this continuous probability approximation to differ too much from the corresponding discrete distribution. 

%In other words it represents $\bbP [T = t]$ with a power law $p^*(t) = (\alpha - 1) \cdot t^{-\alpha}, t \geq 1$ for some constant $C$. Note that this $\alpha - 1$ coefficient comes from the fact that we want to scale $p^*$ such that $\int_1 ^{\infty} p^*(t) dt = \bbP [T > t]$ to equal $1$. 

We will also define a new variable $c^*$ that translates the expansion of bounds by a multiplicative factor of $c$ used in the encoding subprocess to an equivalent expansion in ranks. In particular, we want $p^*(s\cdot c^*) = \frac{1}{c} p^*(s)$. Since we want our multiplication by $c^*$ to scale probability by $\frac{1}{c}$, we set $c^* = c^{\frac{1}{\alpha}}$.

For every interval $r_k \in \mathcal{R}$ we will define $s_{k-}, s_{k+}$ such that $p^*(s_{k-}) = r_{k+}$ and $p^*(s_{k+}) = r_{k-}$. The relative ordering is flipped because the $p^*$ function is decreasing. We will also define the variable $\gamma_k^* = s_{k+}/s_{k-}$.

We will also define $\Psi$ to be the set of events $\psi_j$ corresponding to the event where $p(x_i | x_1^{i - 1}) \in r_j$. The probability of $\psi_i$ depends on $\cR$. Specifically
\begin{align}
\bbP[\psi_i] = \int_{s_{k(i)-}}^{s_{k(i)+}} p^*(t) dt \,.
\end{align}

Also, let $T$ be a random continuous variable with PDF $p^*$.

\subsubsection{Computing Buckets}

% \jennifer{I realized you had $\bbH$ for entropy in your notation for a while, and i thought it was fine, but seeing it more it bothers me cause $H$ is more standard.}
Recall that we seek to choose buckets $\cR$ that minimize the expected number of bits used to compress each token under our assumptions. From our previous results in the \Cref{sec::performance}, 
\begin{align}
\bbE[|y_i|] &= \bbE[|A_{k(i)}|] + \bbE[m_i] + 2 
\\ &\approx 2.33276 + \kappa + H[\Psi] + \bbE[\log |U_i|] \label{eq::length_yi_expression}\,.
\end{align}
Here, the letter $\kappa$ represents the Huffman encoding loss from entropy. We also use $\approx$ since there is a difference between the limit achieved in \Cref{lem::collision} and the actual value of $\bbE[|A_{k(i)}|]$ from the finite values of $|U_i|$.

From here, we will express each value in terms of the buckets in $\mathcal{R}$. We will start by simplifying the terms $H[\Psi]$ and $\bbE[\log |U_i]]$.

\begin{align}
H[\Psi] 
&= \sum_{k \in \{1, \dots, |\mathcal{R}|\}}  -\bbP[\psi_k]  \log \bbP[\psi_k]  
\\&= \sum_{k \in \{1, \dots, |\mathcal{R}|\}}  -\bbP[\psi_k]  \log \left(\int_{s_{k-}}^{s_{k+}} p^*(t) dt \right)   
\\& = \sum_{k \in \{1, \dots, |\mathcal{R}|\}}  -\bbP[\psi_k]  \log \left(s_{k-}^{-\alpha + 1} - s_{k+} ^{-\alpha - 1} \right) 
\\& = -\bbE\left[\log \left(s_{k(i)-}^{-\alpha + 1} - s_{k(i)+} ^{-\alpha + 1} \right)\right]
\\& = -\bbE\left[\log \left(s_{k(i)-}^{-\alpha + 1} - (\gamma_{k(i)}^*s_{k(i)-}) ^{-\alpha + 1} \right)\right]
\\& = (\alpha - 1) \cdot \bbE[\log(s_{k(i)-})]  \nonumber
\\&\quad \quad -\bbE\left[\log \left(1 - (\gamma_{k(i)}^*) ^{-\alpha + 1} \right)\right]
\end{align}

\begin{align}
\bbE&[\log |U_i|] \\&= \bbE\left[\log \left(\int_{(c^*)^{-2}s_{k(i)-}} ^ {(c^*)^2s_{k(i)+}} p^*(t) dt  \right) \right] 
\\&=\bbE\left[\log ((c^*)^2s_{k(i)+} - (c^*)^{-2}s_{k(i)-})\right]
\\&= 2 \log (c^*) + \bbE\left[\log (s_{k(i)+} - \frac{1}{(c^*)^4} s_{k(i)-})\right]
\\&= 2 \log (c^*) + \bbE\left[\log (\gamma_{k(i)}^* s_{k(i)-} - (c^*)^{-4} s_{k(i)-})\right]
\\&= 2 \log (c^*) + \bbE[\log(s_{k(i)-})] 
+ \bbE\left[\log (\gamma_{k(i)}^* - (c^*)^{-4}\right]\,.
\end{align}

We can now substitute in these terms into \eqref{eq::length_yi_expression}.

\begin{align}
\bbE[|y_i|] &\approx 2.33276 + \kappa + H[\Psi] + \bbE[\log |U_i|] 
\\&= \kappa + 2.33276 + 2 \log(c^*) + \alpha \bbE[\log(s_{k(i)-})]
\\&+ \bbE \left[\log \left( 
\frac{\gamma_{k(i)}^* - (c^*)^{-4}}{
1 - (\gamma_{k(i)}^*) ^{-\alpha + 1} 
}
\right) \right]\,.
\end{align}

% \begin{align}
% H&[\Psi] + \bbE[\log |U_i|]
% \\&= - \bbE[\log \left(s_{k(i)-}^{-\alpha + 1} - s_{k(i)+}^{-\alpha + 1}\right)] 
% \\& \quad\quad + \bbE\left[\log ((c^*)^2 \cdot s_{k(i)+} - \frac{1}{(c^*)^2}s_{k(i)-})\right]
% \\ & = \bbE\left[\log \left( \frac{(c^*)^2s_{k(i)+} - \frac{1}{(c^*)^2}s_{k(i)-})}{ s_{k(i)-}^{-\alpha + 1} - s_{k(i)+}^{-\alpha + 1}} \right) \right]
% \\& = \bbE\left[\log \left( \frac{(c^*)^2(s_{k(i)-}) \gamma_{k(i)}^* - \frac{1}{(c^*)^2}s_{k(i)-})}{ s_{k(i)-}^{-\alpha + 1}(1 - {\gamma_{k(i)}^*}^{-\alpha + 1})} \right) \right]
% \\& = \bbE\left[\alpha \log (s_{k(i)-}) + \log \left( \frac{(c^*)^2 \gamma_{k(i)}^* - \frac{1}{(c^*)^2})}{ 1 - {\gamma_{k(i)}^*}^{-\alpha + 1}} \right) \right]
% \\& = \bbE\left[\alpha \log (s_{k(i)-}) + 2 \log c^* + \log \left( \frac{\gamma_{k(i)}^* - \frac{1}{(c^*)^4})}{ 1 - {\gamma_{k(i)}^*}^{-\alpha + 1}} \right) \right]\,.
% \end{align}

Currently, we can isolate $\bbE[\log(s_{k(i)-})]$ as an additive term. However we want to relate this term to a quantity like $\bbE[\log t]$ which is a statistic of the distribution $p^*$ (rather than of the bucket choices). 

Notice that 

\begin{align}
\bbE[\log(t_i)] &= \sum_{k \in \{1, \dots, |\mathcal{R}|\}}  \int_{s_{k-}}^{s_{k+}} \log(t) \cdot p^* (t) dt
\\ & = \sum_{k \in \{1, \dots, |\mathcal{R}|\}} \bbP[\psi_k] \frac{\int_{s_{k-}}^{s_{k+}} \log (t) \cdot p^*(t) ds}{\bbP [\psi_k]}
\\ & = \sum_{k \in \{1, \dots, |\mathcal{R}|\}} \bbP[\psi_k] \frac{\int_{s_{k-}}^{s_{k+}} \log (t) \cdot p^*(t) ds}{
 \int_{s_{k-}}^{s^{k+}} p^*(t) ds
 }
\\& = \bbE \left[ \frac{\int_{s_{k(i)-}}^{s_{k(i)+}} \log (t) \cdot p^*(t) ds}{
 \int_{s_{k(i)-}}^{s^{k(i)+}} p^*(t) dt
 } \right] \,.
\end{align}

Notice that this inner term is the expected value of $\log _2 ( s_{k}^{\text{geom}})$, where $s_{k}^{\text{geom}}$ is the geometric mean of the terms within the interval $r_k$. In other words,

\begin{align}
 \bbE[\log(t_i)]= \bbE \left[ \log (s_{k(i)}^{\text{geom}})\right] \label{eq::expectation_gm}\,.
\end{align}

We will evaluate this geometric mean using the $\ln$ instead of $\log$ to make the derivation slightly easier.
%This geometric mean can be evaluated as 
\begin{align}
\log (s_{k(i)}^{\text{geom}}) &= \frac{\int_{s_{k(i)-}}^{s_{k(i)+}} \log (t) \cdot p^*(t) ds}{
 \int_{s_{k(i)-}}^{s^{k(i)+}} p^*(t) dt
 } \\
  \Longrightarrow \ln(s_{k(i)}^{\text{geom}})
%s_{k(i)}^{\text{geom}} 
%&= 2^{\left(\frac{\int_{s_{k(i)-}}^{s_{k(i)+}} \log (t) \cdot p^*(t) ds}{\int_{s_{k(i)-}}^{s_{k(i)+}} p^*(t) \, dt}\right)}
&= \frac{\int_{s_{k(i)-}}^{s_{k(i)-}} (\alpha - 1) t^{- \alpha}\ln t  \, dt }{
\int_{s_{k(i)-}}^{s_{k(i)+}} (\alpha - 1) t^{-\alpha} dt
}\,.
\end{align}
We use integration by parts to explicitly compute the integrals.

\begin{align}
\ln(s_{k(i)}^{\text{geom}}) &= 
\frac{
-t^{1 - \alpha} \ln t\big|_{s_{k(i)-}}^{s_{k(i)+}} -
\int_{s_{k(i)-}}^{s_{k(i)+}} -t^{- \alpha}  dt
}{
-t^{1 - \alpha}\big|_{s_{k(i)-}}^{s_{k(i)+}}
}
\\&= 
\frac{
-t^{1 - \alpha} \ln t\big|_{s_{k(i)-}}^{s_{k(i)+}} +
\frac{1}{1 - \alpha} t^{1 - \alpha}\big|_{s_{k(i)-}}^{s_{k(i)+}}
}{
-t^{1 - \alpha}\big|_{s_{k(i)-}}^{s_{k(i)+}}
}
 \\&=   
\frac{
s_{k(i)-}^{1 - \alpha} \ln s_{k(i)-} - s_{k(i)+}^{1 - \alpha}\ln s_{k(i)+}
}{
s_{k(i)-}^{1 - \alpha} - s_{k(i)+}^{1 - \alpha}
} + \frac{1}{\alpha - 1} \label{eq:log_s_geom_sum}\,.
\end{align}
Focusing the first term in the sum above, we get

\begin{align}
&\frac{
s_{k(i)-}^{1 - \alpha} \ln s_{k(i)-} - s_{k(i)+}^{1 - \alpha}\ln s_{k(i)+}
}{
s_{k(i)-}^{1 - \alpha} - s_{k(i)+}^{1 - \alpha}
}
\\&=\frac{
\ln s_{k(i)-} - (\gamma_{k(i)}^*)^{1 - \alpha}\ln (\gamma_{k(i)}^* s_{k(i)-})
}{
(1 - (\gamma_{k(i)}^*)^{1 - \alpha})
}
\\&= 
\frac{
\ln s_{k(i)-} (1 - (\gamma_{k(i)}^*)^{1 - \alpha}) - (\gamma_{k(i)}^*)^{1 - \alpha} \ln \gamma_{k(i)}^*
}{
1 - (\gamma_{k(i)}^*)^{1 - \alpha}
}
\\& =  \ln s_{k(i)-} - 
\frac{
 (\gamma_{k(i)}^*)^{1 - \alpha} \ln \gamma_{k(i)}^*
}{
(1 - (\gamma_{k(i)}^*)^{1 - \alpha})
}\,.
\end{align}

Then \eqref{eq:log_s_geom_sum} is equivalent to
\begin{align}
s_{k(i)}^{\text{geom}} = s_{k(i)-} e^{\frac{1}{\alpha - 1} - \frac{
 (\gamma_{k(i)}^*)^{1 - \alpha} \ln \gamma_{k(i)}^*
}{
(1 - (\gamma_{k(i)}^*)^{1 - \alpha})
}}
\end{align}
which can be rearranged to give an expression for $s_{k(i)-}$ in terms of $s_{k(i)}^{\text{geom}}, \gamma_{k(i)}^*$, and $\alpha$, specifically
\begin{align}
\log &(s_{k(i)-}) \\&= \log  (s_{k(i)}^{\text{geom}}) - \frac{\log e}{\alpha - 1} + \log(e) \frac{
 (\gamma_{k(i)}^*)^{1 - \alpha} \ln \gamma_{k(i)}^*
}{
(1 - (\gamma_{k(i)}^*)^{1 - \alpha})
}\,.
\end{align}
%
%Thus, we will continue our expectation calculation using $s_{k(i)-}$ then replace it with $s_{\text{gm}} = s_{k(i)-} e^{\frac{1}{\alpha - 1} - \frac{ (\gamma_{k(i)}^*)^{1 - \alpha} \ln \gamma_{k(i)}^*}{(1 - (\gamma_{k(i)}^*)^{1 - \alpha})}}$.
This lets us continue our computation for
\begin{align}
&H[\Psi] + \bbE[\log |U_i|]
\\& = \bbE\left[\alpha \log (s_{k(i)-}) + 2 \log c^* + \log \left( \frac{\gamma_{k(i)}^* - \frac{1}{(c^*)^4}}{ 1 - (\gamma_{k(i)}^*)^{-\alpha + 1}} \right) \right]
\\ &= \alpha \bbE\left[\log (s^{\text{geom}}_{k(i)}) - \frac{\log e }{\alpha - 1} + \log e \cdot \frac{
 (\gamma_{k(i)}^*)^{1 - \alpha} \ln \gamma_{k(i)}^*
}{
1 - (\gamma_{k(i)}^*)^{1 - \alpha}
}\right] 
\\
&\quad + \bbE\left[2\log c^* + \log \left( \frac{\gamma_{k(i)}^* - \frac{1}{(c^*)^4}}{ 1 - (\gamma_{k(i)}^*)^{-\alpha + 1}} \right) \right]
\\& = \alpha \bbE\left[ \log (s^{\text{geom}}_{k(i)}) - \frac{1}{(\alpha - 1) \ln 2} + \frac{
 (\gamma_{k(i)}^*)^{1 - \alpha} \log \gamma_{k(i)}^* 
}{
1 - (\gamma_{k(i)}^*)^{1 - \alpha}
}\right] \\
& \quad + \bbE\left[ 2\log c^* + \log \left( \frac{\gamma_{k(i)}^* - \frac{1}{(c^*)^4}}{ 1 - (\gamma_{k(i)}^*)^{-\alpha + 1}} \right) \right] 
\\ & =  2\log c^*  + \bbE\left[\alpha \log (s_{k(i)} ^{\text{geom}} )\right] - \frac{\alpha}{(\alpha - 1) \ln 2}
\\& \quad  + \bbE\left[\alpha \cdot \frac{(\gamma_{k(i)}^*)^{1 - \alpha} \cdot \log \gamma_{k(i)}^*
}{(1 - (\gamma_{k(i)}^*)^{1 - \alpha})} + \log \left( \frac{\gamma_{k(i)}^* - \frac{1}{(c^*)^4}}{ 1 - (\gamma_{k(i)}^*)^{-\alpha + 1}} \right) \right]\,.
\end{align}

Notice now that the only parameters we have control over (in the context of choosing bins) are the $\gamma_{k}^*$ parameters. Define 
\begin{align}
&\zeta(\alpha,  c^*) = \label{eq::zeta_def} \\
&\min_{\gamma_{k(i)}} \bbE\left[\alpha \cdot \frac{
 (\gamma_{k(i)}^*)^{1 - \alpha} \log \gamma_{k(i)}
}{1 - (\gamma_{k(i)}^*)^{1 - \alpha}} + \log \left( \frac{\gamma_{k(i)}^* - \frac{1}{(c^*)^4}}{ 1 - {\gamma_{k(i)}^*}^{-\alpha + 1}} \right) \right]\,.
\end{align}

Next, we can estimate of the performance of our algorithm on assuming our idealized power law distribution.
Recall that $T$ is a random continuous variable with PDF $p^*$. Using the definition of $p^*(t)$ in \eqref{eq::p_star_def}, we have
\begin{align}
\bbE [\alpha \log t] &= \bbE\left[ \alpha \cdot \frac{1}{-\alpha} \cdot \log\left(\frac{p^*(t)}{\alpha - 1}\right)\right]  
\\ & = \bbE[\log(\alpha - 1) - \log (p^*(t))] 
\\ &
= \log (\alpha - 1) + h[T]
\end{align}

%Unlike in previous sections, the letter $H$ here represents differential entropy rather than traditional entropy. \footnote{As Jaynes demonstrated, differential entropy cannot be directly compared to traditional entropy due to dimensional issues. \cite{jaynes} However, as he demonstrated in his paper, there is a way to bridge the gap so that we can compare the theoretical differential entropy we obtain with traditional entropy obtained from experiements. We will investigate this connection in the future.}

Combining this with \eqref{eq::expectation_gm} gives that \eqref{eq::length_yi_expression} can be expressed as
\begin{align}
\bbE[|y_i|] &\approx \kappa + 2.33276 +
2\log c^* + \bbE[\alpha \log (s_{k(i)} ^{\text{geom}} )]
\\& \quad \quad  - \frac{\alpha}{(\alpha - 1) \ln 2} + \zeta (\alpha,  c^*)
%\\ &\max_{\gamma_{k(i)}} \bbE\left[\alpha \cdot \frac{ (\gamma_{k(i)}^*)^{1 - \alpha} \log \gamma_{k(i)}}{1 - (\gamma_{k(i)}^*)^{1 - \alpha}} + \log \left( \frac{\gamma_{k(i)}^* - \frac{1}{(c^*)^4}}{ 1 - {\gamma_{k(i)}^*}^{-\alpha + 1}} \right) \right
\\&= \kappa + 2.33276 + 2\log (c^*)  + \log (\alpha - 1) + h[T] 
\\& \quad\quad - \frac {\alpha} {(\alpha - 1) \ln 2} + \zeta (\alpha, c^*) \label{eq::expression_length_yi}
\end{align}
This gives an analytic expression that approximates the expected code length of one token. Since our assumptions is that the the output distribution of our predictive model gives the same power law distribution, this expression is meant to hold for all tokens. 

\fi

\section{Experimental Results}

\label{sec::experiment_cordelia}

\ifshortversion
\else
\subsection{Setup}
\fi

%\jennifer{Put margin of error for decompression, for showing statistical error}

%\jennifer{Put in a description of how you produced the non-determinism. Need some explanation.}

%\jennifer{Define compression ratio somewhere.}

We ran our experiments on two real text datasets: randomly selected Wikipedia articles
%\footnote{We pulled $50$ Wikipedia articles at random using \verb|en.wikipedia.org/wiki/Special:Random|.} 
and acts of \emph{Hamlet} by Shakespeare.
We implemented our algorithm using the Meta Llama 3.1 Q4\_K\_S LLM model to provide token probabilities. These experiments were conducted on the MIT Supercloud \cite{MITSupercloud}. To simulate non-determinism, we used two different setups for the encoding and decoding. In particular, the decoder's setup allowed it to utilize GPUs for speedup.
\ifshortversion % experiment R details
For our tests, we used buckets $\mathcal{R} = \{[0, 8^{-32}], (8^{-32}, 8^{-31}], \dots, (8^{-2}, 8^{-1}], (8^{-1}, 1]\}$.
In this section, instead of $c$, we use $q = 1/c$ to parameterize the robustness of our algorithm.
\else

For our tests, we used buckets \[\mathcal{R} = \{[0, 8^{-32}], (8^{-32}, 8^{-31}], \dots, (8^{-2}, 8^{-1}], (8^{-1}, 1]\}\] and a codebook $\mathcal{A}$ consisting of $\{0, 10, 110, \dots \}$. 
We calibrated $60$ versions of our compression model. Five versions of each of these sixty models are given values of $q = \frac{1}{c} \in \{0.02, 0.05, 0.1, 0.2, 0.3, 0.4, 0.5, 0.6, 0.7, 0.8, 0.9, 1\}$, and each of these five models were given a different set of longform token representation table $B$. These representations were created by a random seed-based algorithm, such that we could decompress each article with the same longform token representation table. 
\fi % experiment R details

%We also compressed each of the five Hamlet acts twelve times, only running with one longform token representation table. We will refer to these tests as Test 1.

%After running Test 1, we used the data necessary to determine the approximate $c$ value that provided the best balance between compression ratio and accuracy, as well as ranges matching we derived from the analysis in the previous section. We kept the two largest ranges the same as in the previous test to isolate the ranges that dealt with the body. Setting $q$ to the greatest value tested with a high accuracy ($0.3$), we compressed each of the fifty articles again to look for potential improvements in compression ratio. We will refer to these tests as Test 2.
\ifshortversion
\else
\subsection{Compression Ratios}
\fi

\ifshortversion % description compressed file
\else
After each compression job, we converted each bitstream to bytes and stored these bytes as a text file. From there, we directly compared the compressed file sizes against the original file sizes. 
\fi % description compressed file

We compute compression ratio as the original size of the file over the compressed size.
\begin{figure}
    \centering
    \includegraphics[width=0.9\linewidth]{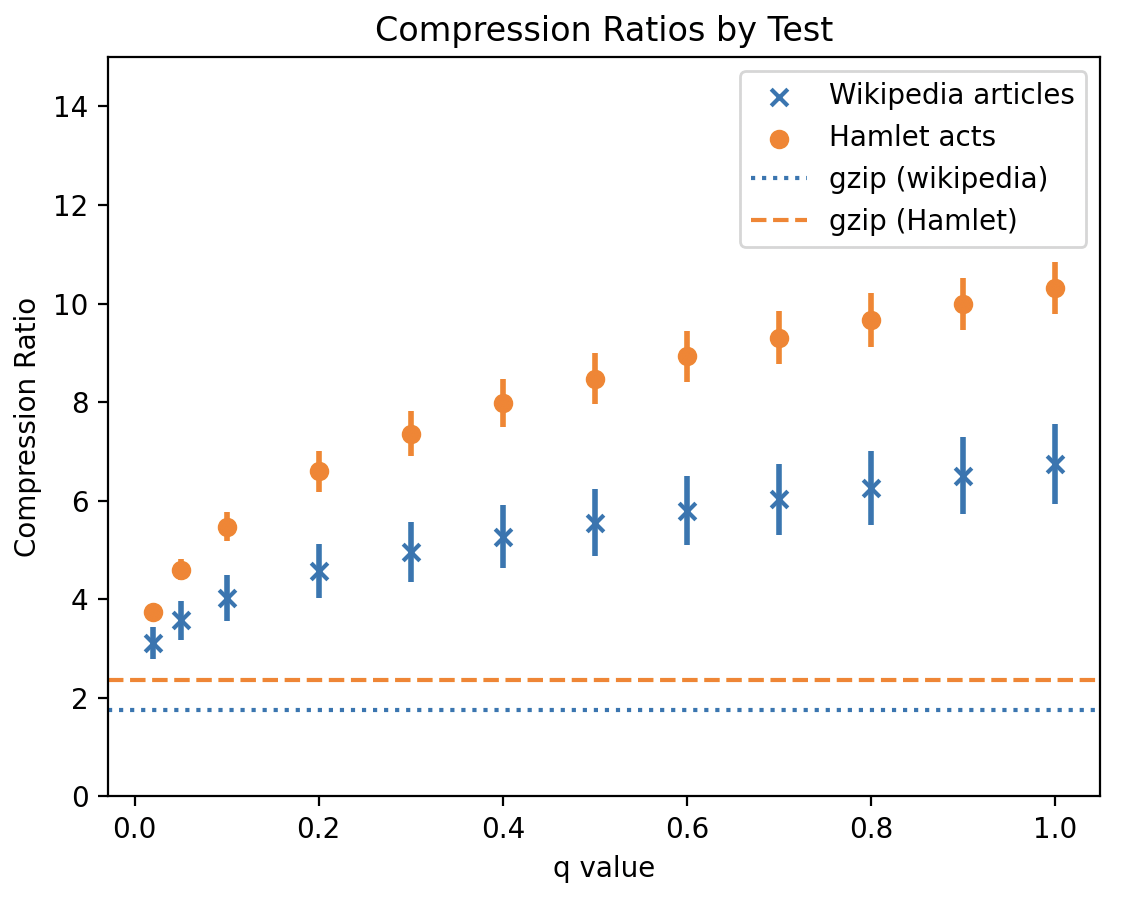}
    \caption{Compression ratios achieved in our experiements. Error bars shows the standard deviation in compression ratios. The dashed horizontal lines represent the mean compression ratios of the commonly used gzip algorithm.}
    \label{fig:placeholder}
\end{figure}
Models with a smaller $c$ naturally maintain significantly larger compression ratios than ones with larger values of $c$. However, these values of $c$ come at the cost of decreased mismatch resistance. Still, values of $c$ closer to $1$ (which are resistant to very large errors) maintain higher compression ratios than industry standards like gzip. 

%Notice that the compression ratios found in Test 2 are almost identical to the tests (with the same compression ratio) found in Test 1. We took statistics on each token compressed in each test, measuring (among other statistics) the number of binary digits used for each token. In Figure 4, we find that in Test 2, more tokens' compressions are either very short (at most $5$ bits) or very long (at least $16$ bits.) This corresponds to fewer tokens being in many of the higher-probability ranges, but tokens in lower-probability ranges getting longer prefixes.

%Because we only adjusted ranges spanning probabilities in the range $[\frac{1}{64}, 0]$, the range differences only changed the mean digits per token from $7.081$ to $7.121$. The confidence interval in each of the digits-per-token estimates for tests have magnitudes $0.022$ and $0.052$; thus, we cannot confidently say that one of these models compresses information more efficiently on our data.

\ifshortversion
\else
\subsection{Decoding Accuracy}
\fi
\begin{figure}
    \centering
    \includegraphics[width=0.9\linewidth]{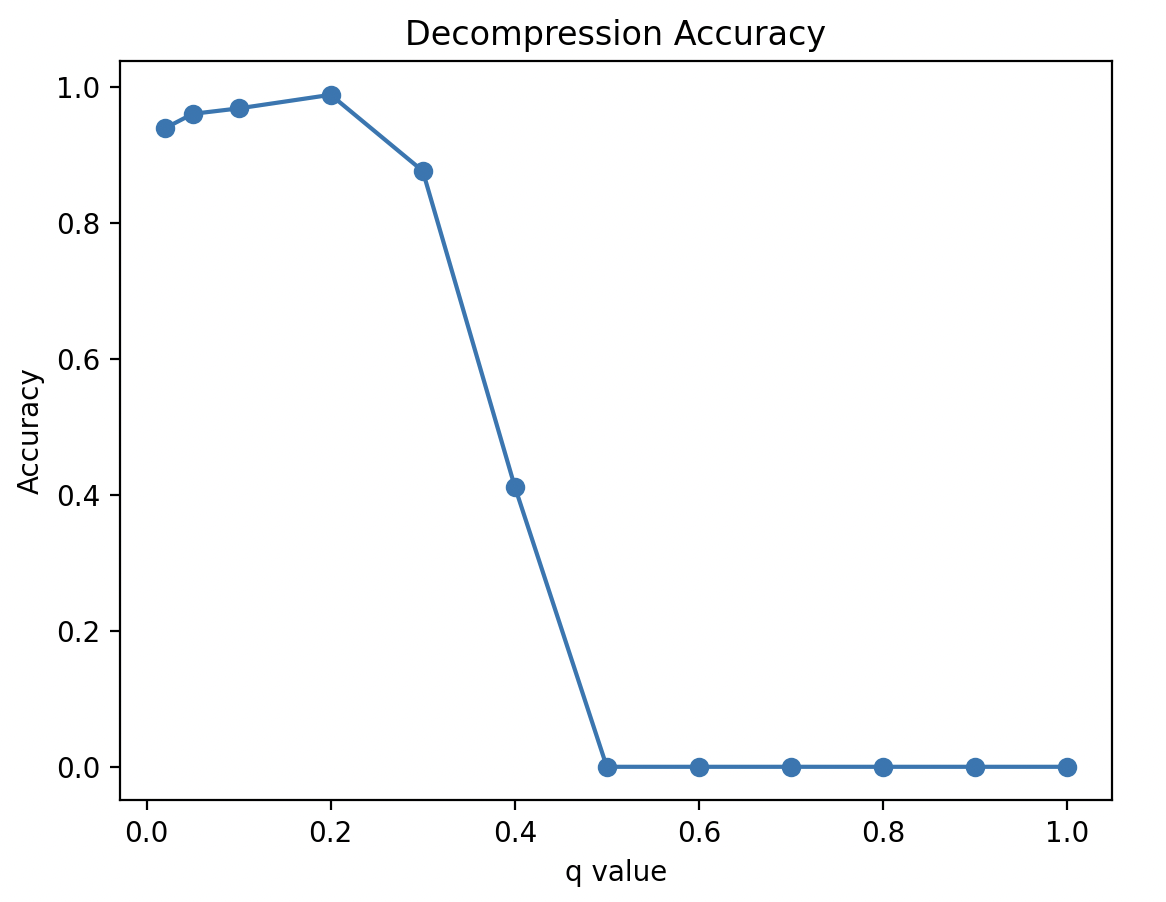}
    \caption{Decompression accuracies of Wikipedia articles, sorted by $q$ value. %Smaller values of $q$ should be resistant to larger errors in LLM probability outputs between the compressor and the decompressor.
    }
    \label{fig:placeholder}
\end{figure}

Recall that while our algorithm guarantees correct decoding given Assumption \eqref{eq::assumption_bounded}, the non-determinism in our experiments does not necessarily follow this. Thus we are interested in our algorithm's performance under unknown real conditions. %We ran these tests with GPUs to simulate differences in the outputs of the LLM between the compressor and decompressor.

After running each test, we attempted to reconstruct the original text of each Wikipedia article with our decoding procedure. The accuracy tends to remain high until we get to $q = 0.3$ ($c = 10/3)$, where it falls off precipitously. This is not very surprising, since in preliminary testing we noticed that, comparing encoder and decoder ends, many tokens had probabilities that were mismatched by roughly a factor of $2$.

Considering that the high-accuracy $q$-values of $0.2$ and $0.3$ correspond to mean compression ratios of $4.57$ and $4.95$ respectively (far above the gzip mean compression ratio of $1.75$), we can confidently say that our model can outperform industry standards in compressing certain text files. 

Strangely, accuracy seems to increase with $q$ until $q = 0.2$, suggesting that mismatches may have additional characteristics we did not anticipate. Exploring this is a direction for further study. 

%we theorize that the error structure between compressor and decompressor has more outliers than a normal distribution would suggest. 

%In Test 2 (which was run only with a q-value of $0.3$), we obtained an accuracy of $0.90$. While this is higher than the $q=0.3$ accuracy corresponding to Test 1 ($0.88$), this difference is not statistically significant. However, it shows that our compression algorithm works even with different buckets.

% \\
% \\
% Note: Since we initially had different estimates of parameters like $\alpha$ when we conducted these tests, the range values don't match up with the values we obtained in our theoretical analysis. \jennifertwo{Don't think you can put this in. We will think on this.}

\ifshortversion
\else
\subsection{Comparison to Theoretical Expected Length}

Using values of $c$ from our experimental results, we can combine this with our analysis in \Cref{sec::interval} to estimate the expected code length per symbol under our power law assumption. 

First, we plug in our value of $\alpha = 1.804$ which we experimentally derived in \Cref{sec::interval}. We also plug in $c = \frac{10}{3}$. This value of $c$ is chosen because it is the smallest value of $c$ tested for which the compression accuracy remained high. Given $\alpha = 1.804$, we can use  $c^* = (\frac{10}{3})^{\frac{1}{\alpha}} \approx 1.95$ to estimate the value of $\gamma_k^*$ that minimizes expression for \eqref{eq::zeta_def}.  We find numerically that this is achieved at $\gamma_k^* \approx 3.748$. This value is not dependent on the index $k$, so we will set all such $\gamma^*$ to this value.

Due to our power law assumption, $\gamma^* =  3.748$ translates to a $\gamma := 3.748^{-\alpha}=0.0922$ multiplier factor on the probability bounds of each bucket.

To determine the expected length given in \eqref{eq::expression_length_yi}, we again use $\alpha$ and $c^*$ to get theoretically
%
%We can now plug in our values of $\alpha$ and $c^*$, in addition to the values of all the optimal $\gamma_{k}^*$.
%
\begin{align}
\bbE[|y_i|] \approx \kappa + h[T] + 5.01 \approx \kappa + 8.55
\end{align}

Experimentally, we find that mean code length per token is $7.12$. Thus, while our analysis seems to overestimate $\bbE[|y_i|]$, the difference is not massive.

\begin{figure}
    \centering
    \includegraphics[width=0.9\linewidth]{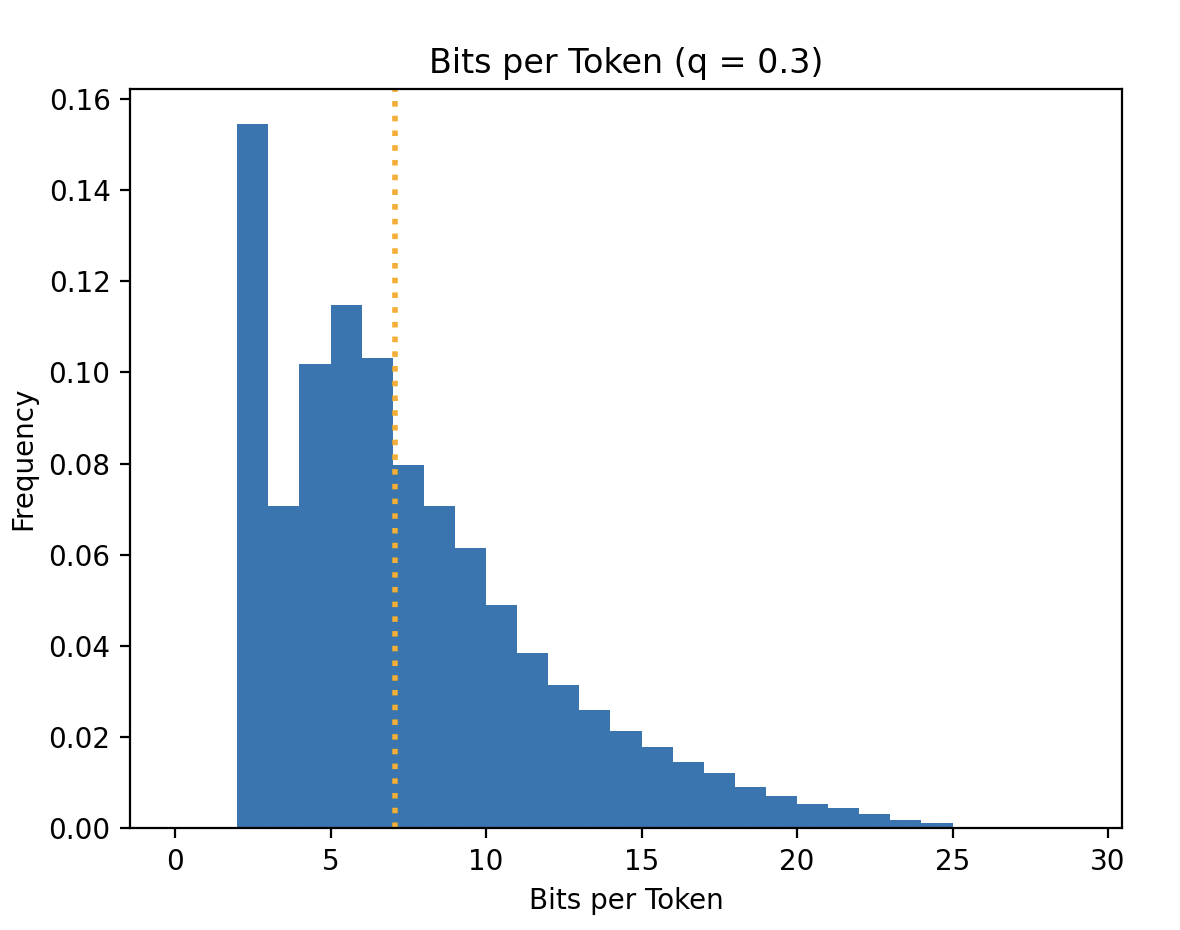}
    \caption{Histogram of the number of bits used to encode each token ($|y_i|$) in Wikipedia article tests, given $q = 0.3$. }
    \label{fig::bpt}
\end{figure}

We believe part of the disparity between the theoretical and experimental result likely comes from head terms in the distribution. Our theoretical analysis did not take their empirical distribution into consideration; we assumed they also followed a simple power law. 

Looking at the bits per token distribution confirms this disparity. In \Cref{fig::bpt} around $15$ percent of token encodings $y_i$ are two bits long. This requires both $p(x_i) > 1/8$ and $|U_i| = 0$. However, our theoretical analysis indicates that the most probable values of $|U_i|$ are $9$ and $10$, which is inconsistent with the experimental results
. Therefore, the behavior of these high probability tokens likely explains some of the performance increase from the theoretical analysis to the experimental results.

Our theoretically derived $\gamma$ value of $0.0922$ may also not actually be optimal, even given the fixed continuous distribution assumption. In our analysis, we ignored the $\kappa$ term representing the bit loss the Huffman code has when compared to theoretical entropy. 

\fi

\section{Conclusion and Future Work}

In this paper, we presented and analyzed a data compression algorithm which tolerates non-determinism. The theoretical correctness and performance of the algorithm was matched by tests on real data. On some settings, the algorithm achieved compression ratios almost three times that of gzip while retaining accuracies greater than $90 \%$. Thus our algorithm is competitive with industry-standard compression algorithms.

More investigation can be devoted to accurately pinning the behavior of the mismatch present in different scenarios. For our current test, the slight dip in accuracy for $q < 0.2$ and preliminary investigation of LLM mismatch distributions suggest that logit differences may have additional correlations.
\ifshortversion
\else

Further work can also be put into selecting bucket sizes for this algorithm. Conducting more tests that use different decaying intervals could give better intuition on how to design buckets. In addition, investigating the behavior of high-probability tokens in the distribution (which were not accounted for in this paper when analyzing bucket selection) could lead to significant improvements. Our algorithm can easily adjust the buckets sizes of the high probability tokens to account for any statistical correlations present.

\fi

%Can say here that exact asymptotic constant from the collision analysis is left as an open problem. Put in also other directions for future work on this problem. 

\section{Acknowledgments}

%The authors would like to acknowledge Ali Jadbabaie for his support on the project. The MIT UROP office directly funded most of this project, and Ali Jadbabaie provided the remainder of the funding. 

The authors would like to acknowledge Ali Jadbabaie for his support of this project. %The majority of the funding was provided by the MIT UROP Office, with additional support from Ali Jadbabaie.

The authors also thank Aviv Adler for his advice and comments on the manuscript. 
%\jennifer{Note to Cordelia: check your email if you got a something from UROP about the name of the fund to acknowledge. If there is no special fund name, we will just write UROP as funding.}

\newpage

%%%% REFERENCES %%%%
\bibliographystyle{resources/IEEEbib}
\bibliography{references}

%%%% APPENDIX %%%%
%\newpage
%\appendices
%\appendix

%\section{Appendix 1}

\end{document}